\documentclass{article}

\usepackage[final]{staix_2026}

\usepackage[utf8]{inputenc}
\usepackage[T1]{fontenc}
\usepackage{hyperref}
\hypersetup{
  colorlinks=true,
  linkcolor=black,
  citecolor=black,
  urlcolor=black
}
\usepackage{url}
\usepackage{booktabs}
\usepackage{amsfonts}
\usepackage{nicefrac}
\usepackage{microtype}
\usepackage{xcolor}
\usepackage{graphicx}
\usepackage{amsmath,amssymb,amsthm}

\usepackage{graphicx} 
\usepackage{amsfonts}
\usepackage{cleveref}
\usepackage{algpseudocode}
\usepackage{float}
\usepackage{algorithm} 
\usepackage{algpseudocode}
\usepackage{booktabs}
\usepackage{natbib}
\usepackage{enumitem}
\usepackage{wrapfig}
\usepackage{url}

\newtheorem{thm}{Theorem}[section]
\newtheorem{prop}{Proposition}[section]

\crefname{thm}{theorem}{theorems}
\Crefname{thm}{Theorem}{Theorems}
\crefname{prop}{proposition}{propositions}
\Crefname{prop}{Proposition}{Propositions}
\crefname{lem}{lemma}{lemmas}
\Crefname{lem}{Lemma}{Lemmas}
\crefname{cor}{corollary}{corollaries}
\Crefname{cor}{Corollary}{Corollaries}

\title{Generative Synthetic Data for Causal Inference: Pitfalls, Remedies, and Opportunities}

\author{%
  Yichen Xu \\
  University of California, Berkeley \\
  \texttt{yichen\_xu@berkeley.edu}
}

\begin{document}

\maketitle

\begin{abstract}
Synthetic tabular data are often evaluated by distributional similarity, privacy distance, or train-on-synthetic-test-on-real predictive performance, but these criteria do not ensure validity for causal inference. We show that fully generative tabular synthesizers, including GAN- and LLM-based models, can preserve predictive utility while distorting average treatment effect (ATE) estimates. The failure is structural: ATE preservation requires both a realistic covariate law and an accurate treatment-effect contrast, whereas prediction loss penalizes treatment-effect error only through an overlap-weighted term. Thus, under imbalance or limited overlap, a generator may reproduce dominant observed outcomes while underlearning intervention-relevant contrasts. We formalize this mismatch through sensitivity and loss-decomposition results. Motivated by this causal analysis and intuition, we propose a hybrid synthetic-data framework for causal inference that generates covariates while modeling treatment and outcome mechanisms separately. We evaluate the framework in three settings: ATE preservation under fully generative versus hybrid synthesis, augmentation for practical positivity problems, and diagnostic simulation engines for comparing OR, IPW, AIPW, and TMLE before real-data analysis. We also stress-test the hybrid construction across settings that vary overlap, covariate dimension, seed sample size, and treatment-effect complexity, including a logistic outcome-model misspecification check. Across controlled simulation experiments, hybrid synthesis improves causal fidelity relative to fully generative baselines; the ACTG application shows improved predictive fidelity and potential for finite-sample estimator benchmarking. LLM-based hybrid synthesis is often more faithful than CTGAN in settings where causal fidelity can be assessed.
\end{abstract}

\section{Introduction}

Synthetic data is increasingly used for data sharing, privacy protection, augmentation, and simulation, but its role in causal inference remains underdeveloped. We consider a basic causal inference setting with covariates \(W\), binary treatment \(A\), and outcome \(Y\). The central difficulty is that causal analysis requires more than reproducing the observed data distribution. For the average treatment effect (ATE), the key object is the conditional treatment-effect contrast
\[
Q(1,W)-Q(0,W),
\qquad
Q(a,W)=\mathbb E[Y\mid A=a,W],
\]
together with the covariate law of \(W\). A synthetic dataset can therefore look realistic, achieve reasonably good train-on-synthetic-test-on-real (TSTR) performance, or remain close to the original data under privacy-distance diagnostics, while still distorting the causal estimand. We formalize this gap through sensitivity and loss-decomposition results showing that predictive fidelity does not imply causal-contrast fidelity.

The failure is structural: predictive objectives primarily reward accuracy on the observed treatment path, whereas causal inference requires preserving the contrast between treatment paths. For the ATE, synthetic validity depends on both the covariate law \(P_W\) and the conditional treatment-effect function \(Q(1,W)-Q(0,W)\). Our sensitivity analysis separates these two sources of error, and our prediction-loss decomposition shows why the second can be weakly controlled by ordinary predictive training. Writing \(m_\pi(W)=\mathbb E[Y\mid W]\) and \(\tau(W)=Q(1,W)-Q(0,W)\), the outcome-prediction loss decomposes into prognostic error plus an overlap-weighted treatment-effect error, with weight \(\pi(W)\{1-\pi(W)\}\). Thus, under imbalance or limited overlap, a model may reproduce dominant factual outcomes while poorly preserving the counterfactual contrast needed for ATE estimation. This explains how fully generative tabular synthesizers, including GAN- and LLM-based models, can perform well on TSTR or reconstruction-style metrics while producing biased causal estimates.

These observations motivate a hybrid generation strategy for tabular data generation. Instead of generating the full triplet \((W,A,Y)\) from one model, we generate covariates \(W\), monitor their realism and privacy using distance-to-closest-record diagnostics, and model treatment and outcome mechanisms separately. When the goal is downstream ATE estimation rather than reproducing the observational treatment policy, the synthetic propensity score is not the central object; treatment assignment may instead be chosen by design, for example randomized, to improve overlap and align outcome learning with the causal contrast. Empirically, across multiple simulation settings that vary overlap, covariate dimension, seed sample size, and treatment-effect complexity, hybrid generation substantially improves causal fidelity for both LLM- and GAN-based generators while keeping privacy-distance diagnostics comparable.

We further study two uses of hybrid synthetic data. First, for practical positivity problems, we introduce synthetic augmentation that pairs extreme-propensity observations with nearby synthetic covariates. This does not solve structural positivity violations, but can reduce error when improved conditional-effect estimation outweighs the covariate shift introduced by augmentation. Second, we develop a hybrid synthetic simulation engine for pre-analysis estimator evaluation. By repeatedly sampling from a learned data-generating environment, analysts can compare OR, IPW, AIPW, and TMLE in terms of finite-sample bias, variance, RMSE, and MSE before the final real-data analysis. Together, these results position synthetic data as a structured component of the causal workflow whose value depends on causal validity, overlap, estimation stability, and diagnostic usefulness.

\section{Related Work}

Relatively few studies examine synthetic data for causal inference. Our work builds on synthetic generation, causal effect estimation, and simulation-based estimator evaluation. Recent work studies LLMs as synthetic data generators for NLP, including issues of generation, curation, faithfulness, diversity, and downstream classification performance~\citep{long24,li23}; related work studies LLM-based synthetic oversampling for imbalanced classification and spurious correlation~\citep{nakada25}, and bias-corrected augmentation when synthetic and target distributions differ~\citep{lyu26}. For tabular data, CTGAN~\citep{xu19} introduced conditional GANs with mode-specific normalization and training-by-sampling, while GReaT~\citep{borisov23} showed that language models can generate realistic tabular rows by serializing them as text. These methods mainly target distributional or predictive fidelity. Several studies consider synthetic data or foundation models for causal inference: \citet{bartolomeis25} use foundation models for randomized experiments, focusing on experimental efficiency; \citet{liu25} survey LLM applications in causal inference; and \citet{amad26} study synthetic data for downstream medical causal inference, proposing causal desiderata, evaluation metrics, and a structured generator.

Robinson-style residualization and the R-learner provide useful context for our decomposition. Robinson's partially linear model uses residualization to separate a target component from nonparametric nuisance structure~\citep{robinson88}. The R-learner extends this idea to heterogeneous treatment-effect estimation by constructing a residualized causal loss: it removes the marginal outcome component and then uses residualized treatment variation to fit the treatment-effect function~\citep{nie21}.

For positivity and limited-overlap problems, many methods reduce instability by trimming, truncating, reweighting, or filtering observations with extreme propensity scores~\citep{crump09,gruber22,xu26tmle,zhou20,joseph07,freedman08,cole08}. Crump et al. characterize subpopulations for which average treatment effects can be estimated most precisely and show that, under homoscedasticity, this leads to trimming observations with extreme propensity scores, for example those below 0.1 or above 0.9. A related line of work uses overlap weighting to emphasize units likely to receive either treatment; these weights are bounded and target the average treatment effect in the overlap population~\citep{li18,cheng22,matsouaka24}.

ATE estimation from observational data is well studied. Outcome regression (OR), or G-computation, estimates \(Q(a,W)=\mathbb E[Y\mid A=a,W]\) and plugs fitted countermeans into the ATE functional~\citep{robins86}; flexible regressors such as BART are often used~\citep{chipman10}. Inverse probability weighting (IPW), related to the Horvitz--Thompson estimator, reweights by the propensity score to emulate randomization~\citep{horvitz52}. Augmented IPW (AIPW) combines outcome and propensity models, giving double robustness if either nuisance model is correct~\citep{robins94}. Targeted maximum likelihood estimation (TMLE) updates an initial outcome regression through a targeted fluctuation submodel to solve the efficient influence-function equation, yielding a doubly robust, locally efficient substitution estimator~\citep{vdl06,vdl11,vdl18}. Because these estimators have distinct finite-sample and limited-overlap failure modes, we benchmark OR, IPW, AIPW, and TMLE.

Causal estimator evaluation often relies on simplified simulations that miss real covariate complexity. ACIC-style benchmarking addresses this limitation by using realistic testing grounds, including real covariates paired with simulated assignment and response-surface mechanisms, to compare many causal estimators~\citep{vincent19}. Generative synthetic data have also been used for bootstrap-style inference. \citet{tran26} use learned generative models to produce synthetic resamples, approximate the sampling distribution of a fixed estimator, and build confidence intervals, including settings where the classical bootstrap can fail.

Our work differs from these related lines in purpose and emphasis. Compared with \citet{amad26}, we analyze ATE preservation directly: our sensitivity result shows that the covariate law and treatment-effect contrast are the essential quantities for ATE preservation, while reproducing the observational treatment-assignment mechanism need not be the primary target when synthetic data are used as a causal-purpose simulation environment. Compared with the R-learner literature, we do not design a causal loss or propose a new CATE estimator; instead, we start from ordinary factual prediction loss and show that it decomposes into prognostic error and an overlap-weighted treatment-effect error. This explains why predictive fidelity can be misleading: synthetic generators may learn dominant covariate-outcome patterns while weakly constraining treatment-effect contrasts in imbalanced or low-overlap regions. Compared with trimming and overlap weighting, our augmentation does not discard observations, truncate weights, or change the estimand to an overlap population; it uses synthetic covariate support and paired treatment assignment to stabilize learning of \(Q(A,W)\) in rare but plausible regimes while accounting for covariate shift. Finally, compared with ACIC-style fixed benchmarks and generative bootstrap methods for one estimator, our hybrid synthetic data serve as a pre-analysis diagnostic simulation engine for comparing OR, IPW, AIPW, and TMLE under a learned finite-sample environment before final real-data analysis.

\section{Failure of Generative Models to Preserve Causal Parameters and a Hybrid Generation Approach}

Synthetic data generation for privacy-preserving data sharing has gained increasing attention, with approaches based on GANs~\citep{xu19} and LLMs~\citep{borisov23}. While metrics such as train-on-synthetic-test-on-real (TSTR) are commonly used to evaluate predictive quality, there is limited focus on whether synthetic data preserve causal parameters. This distinction is important because strong predictive fidelity does not necessarily imply causal fidelity. In particular, the average treatment effect (ATE),
\[
\mathbb E[Y^1-Y^0],
\]
depends on how the outcome changes with treatment conditional on covariates, rather than only on how well a synthetic dataset reproduces marginal or predictive patterns.

We focus on the tuple $(W,A,Y)$, where $W$ denotes covariates, $A\in\{0,1\}$ is a binary treatment, and $Y$ is an outcome. Under standard causal assumptions, including consistency, ignorability, and positivity, the ATE is identified as
\[
\Psi(P)
=
\mathbb E_{P_W}
\left[
Q_P(1,W)-Q_P(0,W)
\right],
\qquad
Q_P(a,w)=\mathbb E_P[Y\mid A=a,W=w].
\]
Thus causal preservation requires preserving both the covariate law $P_W$ and the treatment-effect contrast
\[
\Delta_Q(w):=Q(1,w)-Q(0,w).
\]
The following sensitivity bound makes this point explicit.

\begin{prop}[Sensitivity bound for synthetic ATE]
\label{prop:ate_sensitivity}
Let \(P_W\) and \(P_W^\star\) be covariate laws on a common measurable space
\(\mathcal W\). Let \(Y\in[0,1]\), and define
\[
\Delta_Q(w):=Q(1,w)-Q(0,w),
\qquad
\Delta^\star(w):=Q^\star(1,w)-Q^\star(0,w).
\]
Then
\[
\left|
\Psi(P_W,Q)-\Psi(P_W^\star,Q^\star)
\right|
\le
2\,\mathrm{TV}(P_W,P_W^\star)
+
\|\Delta_Q-\Delta^\star\|_{L_1(P_W^\star)}.
\]
\end{prop}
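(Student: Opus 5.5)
We prove the bound with a standard add-and-subtract decomposition through the intermediate quantity \(\Psi(P_W^\star,Q)=\mathbb E_{P_W^\star}[\Delta_Q(W)]\). This quantity uses the original contrast integrated against the synthetic covariate law. By the triangle inequality,
\[
\left|\Psi(P_W,Q)-\Psi(P_W^\star,Q^\star)\right|
\le
\left|\int \Delta_Q\,d(P_W-P_W^\star)\right|
+
\left|\int (\Delta_Q-\Delta^\star)\,dP_W^\star\right|.
\]
The first term isolates the covariate-law discrepancy with the contrast held fixed. The second isolates the contrast discrepancy with the covariate law held fixed.

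\textbf{Second term.} Moving the absolute value inside the integral gives
\[
\left|\int (\Delta_Q-\Delta^\star)\,dP_W^\star\right|
\le \int |\Delta_Q-\Delta^\star|\,dP_W^\star
=\|\Delta_Q-\Delta^\star\|_{L_1(P_W^\star)}.
\]
This requires \(\Delta_Q\) and \(\Delta^\star\) to be measurable and integrable. Both hold because \(Y\in[0,1]\) forces \(Q,Q^\star\in[0,1]\), hence the contrasts are bounded. We take \(Q\) and \(Q^\star\) to be measurable versions of the conditional means, defined on all of \(\mathcal W\), so that both integrals make sense.

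\textbf{First term.} Since \(Q(a,w)\in[0,1]\), we have \(\Delta_Q(w)\in[-1,1]\), so \(\|\Delta_Q\|_\infty\le 1\). For any measurable \(f\) with \(\|f\|_\infty\le 1\) and \(\mathrm{TV}(\mu,\nu)=\sup_B|\mu(B)-\nu(B)|\),
\[
\left|\int f\,d(\mu-\nu)\right|\le \|f\|_\infty\,|\mu-\nu|(\mathcal W)=2\,\mathrm{TV}(\mu,\nu).
\]
To check this, write the signed measure \(\mu-\nu\) via its Jordan decomposition; each of the positive and negative parts has mass \(\mathrm{TV}(\mu,\nu)\). Alternatively, split \(\Delta_Q=Q(1,\cdot)-Q(0,\cdot)\) and use \(|\int g\,d(\mu-\nu)|\le \mathrm{TV}(\mu,\nu)\) for \(g\) with values in \([0,1]\). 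This works because one can shift \(g\) by the constant \(\tfrac12\) without changing the integral, which gives \(\|g-\tfrac12\|_\infty\le\tfrac12\). Either route yields the constant \(2\).

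\textbf{Main obstacle.} The only delicate point is the constant in the first term, which depends on the TV convention. Under the supremum-over-sets convention, the split argument gives \(2\,\mathrm{TV}\), which is exactly the stated bound. Under the \(L_1\) convention the constant would shrink to \(1\), so the stated bound holds either way. We state the sup-over-sets convention explicitly and note that the result is conservative under the alternative. Combining the two term bounds completes the proof.
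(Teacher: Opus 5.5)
Your proposal is correct and follows the paper's proof step by step. Both arguments add and subtract $\mathbb E_{P_W^\star}[\Delta_Q(W)]$, bound the covariate-shift term by $2\,\mathrm{TV}(P_W,P_W^\star)$ using $|\Delta_Q|\le 1$, and identify the remaining term with $\|\Delta_Q-\Delta^\star\|_{L_1(P_W^\star)}$; your Jordan-decomposition (or split-and-recenter) argument for the $2\,\mathrm{TV}$ step fills in a detail the paper asserts without proof.
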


Proposition~\ref{prop:ate_sensitivity} separates synthetic ATE error into two pieces: an error in the generated covariate law and an error in the outcome contrast. The first term, \(2\,\mathrm{TV}(P_W,P_W^\star)\), measures discrepancy between the target and synthetic covariate laws and applies to continuous, discrete, and mixed covariates. The second term, \(\|\Delta_Q-\Delta^\star\|_{L_1(P_W^\star)}\), measures whether the synthetic outcome mechanism preserves the treatment-effect contrast over the synthetic covariate distribution. This second term is directly relevant for causal estimation, but it is not necessarily prioritized by standard row-level generative objectives. \Cref{thm:joint_reconstruction_tradeoff} provides the intuition that joint reconstruction or next-token objectives can downweight the causal contrast. With \(d\) covariates and one outcome, the outcome loss receives only a \(1/(d+1)\) share, so low joint loss may mostly reflect good covariate reconstruction rather than accurate \(Y\mid A,W\). Since Proposition~\ref{prop:ate_sensitivity} shows that ATE error is directly sensitive to
\[
\|\Delta_Q-\Delta^\star\|_{L_1(P_W^\star)},
\]
a hybrid design can improve ATE preservation by separating covariate generation from outcome modeling and allocating more effort to the contrast-relevant \(Q(Y\mid A,W)\).

More importantly, low prediction risk does not by itself guarantee preservation of the causal parameter, although sufficiently strong prediction-risk control can upper-bound causal-parameter error. The core reason is an objective mismatch between prediction and causal-contrast learning. \Cref{thm:exact_lq_cate_ate_decomposition} makes this mismatch explicit. Let
\[
m_\pi(W)=\mathbb E[Y\mid W],
\qquad
\tau(W)=Q(1,W)-Q(0,W),
\]
denote the prognostic component and the treatment-effect contrast, respectively. For a fitted outcome model, let \(\Delta_m(W)\) and \(\Delta_\tau(W)\) be the corresponding estimation errors. Then the outcome-prediction loss decomposes exactly as
\[
\mathcal L_Q
=
\mathbb E_W[\Delta_m(W)^2]
+
\mathbb E_W\!\left[
\pi(W)\{1-\pi(W)\}\Delta_\tau(W)^2
\right],
\qquad
\pi(W)=\mathbb P(A=1\mid W).
\]
Thus prediction loss is not a pure causal loss: it penalizes error in the prognostic component directly, but penalizes treatment-effect error only through the overlap weight \(\pi(W)\{1-\pi(W)\}\). Even under balanced randomization this weight is \(1/4\), and in observational or limited-overlap settings it can be much smaller. Consequently, a model can predict outcomes accurately while still making large CATE or ATE errors, especially when contrast errors occur in low-overlap regions. An analogous objective-mismatch argument for next-token prediction is given in the Appendix.

Motivated by the loss decomposition, we propose a hybrid data generation approach for tabular causal inference. As shown in \Cref{alg:hybrid}, we first train a generative model on the seed dataset $\mathcal D_{\mathrm{seed}}$ to synthesize covariates $\tilde W$, while monitoring privacy and distributional similarity using distance to closest records (DCR)~\citep{borisov23}. We then construct a treatment-assignment mechanism $\hat g(A\mid W)$ and an outcome model $\hat Q(A,W)$ using $\mathcal D_{\mathrm{seed}}$. Finally, for each synthetic covariate $\tilde W_i$, we sample treatment $\tilde A_i$ from $\hat g(\cdot\mid \tilde W_i)$ and generate outcome $\tilde Y_i$ from $\hat Q(\tilde A_i,\tilde W_i)$.

The treatment mechanism $\hat g$ can either estimate the observational propensity in the seed data or be chosen by design. In particular, when the goal is downstream treatment-effect estimation rather than reproducing the original treatment policy, one may set $\hat g(1\mid W)=1/2$ to create a randomized synthetic experiment. This is appropriate because, under identification, the ATE is governed by the covariate law and the outcome regressions,
\[
\psi=\mathbb E_{\tilde P_W}\{\hat Q(1,W)-\hat Q(0,W)\},
\]
rather than by the synthetic treatment mechanism itself. A randomized $\hat g$ balances treatment arms across the synthetic covariate distribution, making downstream inverse-weighting methods on synthetic data more stable. The separation between the outcome model and the covariate generator also enables safe transfer methods, such as REFINE~\citep{xu26}, to incorporate external information when learning the outcome mechanism.

\begin{algorithm}[H]
\caption{Hybrid synthetic data generation}
\label{alg:hybrid}
\begin{algorithmic}[1]
\State \textbf{Input:} Seed data $\mathcal D_{\mathrm{seed}}=\{(W_i,A_i,Y_i)\}_{i=1}^m$ and target size $n$
\State Construct $\hat P_W$ from $\{W_i\}_{i=1}^m$ \Comment{e.g.\ a generative model, monitored by DCR}
\State Construct $\hat g(a\mid w)$ and $\hat Q(a,w)$ \Comment{trained on $\mathcal D_{\mathrm{seed}}$}
\For{$i=1,\dots,n$}
    \State Sample $\tilde W_i \sim \hat P_W$
    \State Sample $\tilde A_i \sim \hat g(\cdot\mid \tilde W_i)$
    \State Set \(\tilde Y_i \gets \hat Q(\tilde A_i,\tilde W_i)\) \Comment{sample Bernoulli with this mean if \(Y\) is binary}
\EndFor
\State \textbf{Return:} $\widetilde{\mathcal D}=\{(\tilde W_i,\tilde A_i,\tilde Y_i)\}_{i=1}^n$
\end{algorithmic}
\end{algorithm}

We benchmark synthetic data generated by large language models (LLMs), generative adversarial networks (GANs), and their corresponding hybrid variants. In the fully synthetic setting, denoted as "Full generative", both treatment $\tilde A$ and outcome $\tilde Y$ are generated directly from the generative model $p_\theta$. In the hybrid setting, only $\tilde W$ is generated by the synthetic covariate model, while $\tilde A$ and $\tilde Y$ are produced using separately fitted nuisance models.

We report the primary \(d=6\) randomized-seed simulation in the main text and defer additional results that vary dimension, sample size, overlap, and outcome-model specification to \Cref{fig:appendix_rf_outcome_mse,fig:appendix_logistic_outcome_mse,fig:appendix_tstr_dcr_vary}. The main result shows that fully generative synthetic data can preserve useful predictive diagnostics while distorting causal estimands. In \Cref{fig:privacy_parallel}, LLM Full achieves strong TSTR performance, but still yields biased ATE estimates; with true ATE \(0.4183\), its IPW estimate is \(0.4751\) and its TMLE estimate is \(0.5096\). This supports the theoretical point that predictive or reconstruction quality alone does not guarantee preservation of the treatment-effect contrast.

Hybrid construction substantially improves causal fidelity in the main setting across all four estimators. For LLM, MSE decreases from \(0.0047\) to \(0.0004\) for IPW, from \(0.0021\) to \(0.0010\) for AIPW, from \(0.0016\) to \(0.0013\) for OR, and from \(0.0108\) to \(0.0005\) for TMLE. For GAN, MSE decreases from \(0.1854\) to \(0.0114\) for IPW, from \(0.1851\) to \(0.0180\) for AIPW, from \(0.1826\) to \(0.0197\) for OR, and from \(0.2027\) to \(0.0087\) for TMLE. The DCR values remain comparable within each generator family, while the hybrid design improves the estimand-relevant component of the data-generating process by modeling the outcome mechanism separately.

The appendix stress tests support the same conclusion beyond the primary setting. Under the outcome nuisance specification, hybrid synthesis reduces ATE MSE across heterogeneous-effect settings that vary overlap, dimension, and seed sample size. For example, in the \(d=20\), \(n_{\mathrm{seed}}=1000\), poor-overlap complex setting, LLM Hybrid reduces MSE from \(0.0115\) to \(0.0002\) for IPW, from \(0.0017\) to \(0.0003\) for AIPW, from \(0.0016\) to \(0.0003\) for OR, and from \(0.0068\) to \(0.0003\) for TMLE. In the \(d=20\), \(n_{\mathrm{seed}}=500\), poor-overlap complex setting, LLM Hybrid similarly reduces OR MSE from \(0.0243\) to \(0.0035\), while AIPW MSE drops from \(0.0238\) to \(0.0034\). For GAN, the gains are especially large in the \(d=6\), poor-overlap complex setting: IPW MSE decreases from \(0.1797\) to \(0.0025\), AIPW from \(0.1789\) to \(0.0025\), OR from \(0.1784\) to \(0.0025\), and TMLE from \(0.1786\) to \(0.0025\). The logistic outcome-model misspecification check further shows that these improvements are not only due to alignment between the hybrid outcome generator and a flexible outcome learner: when OR, AIPW, and TMLE use logistic outcome models, hybrid synthesis still gives the largest gains in complex or limited-overlap settings. In the simple treatment-effect setting, fully generative LLM synthesis can be competitive; for example, LLM Full achieves lower MSE than LLM Hybrid for OR (\(0.0011\) versus \(0.0037\)) and TMLE (\(0.0005\) versus \(0.0037\)). This suggests that hybridization is most useful when the treatment-effect contrast is difficult to learn from fully generative row-level objectives.

\begin{figure*}[t]
    \centering
    \begin{minipage}{0.49\textwidth}
        \centering
        \includegraphics[width=\textwidth]{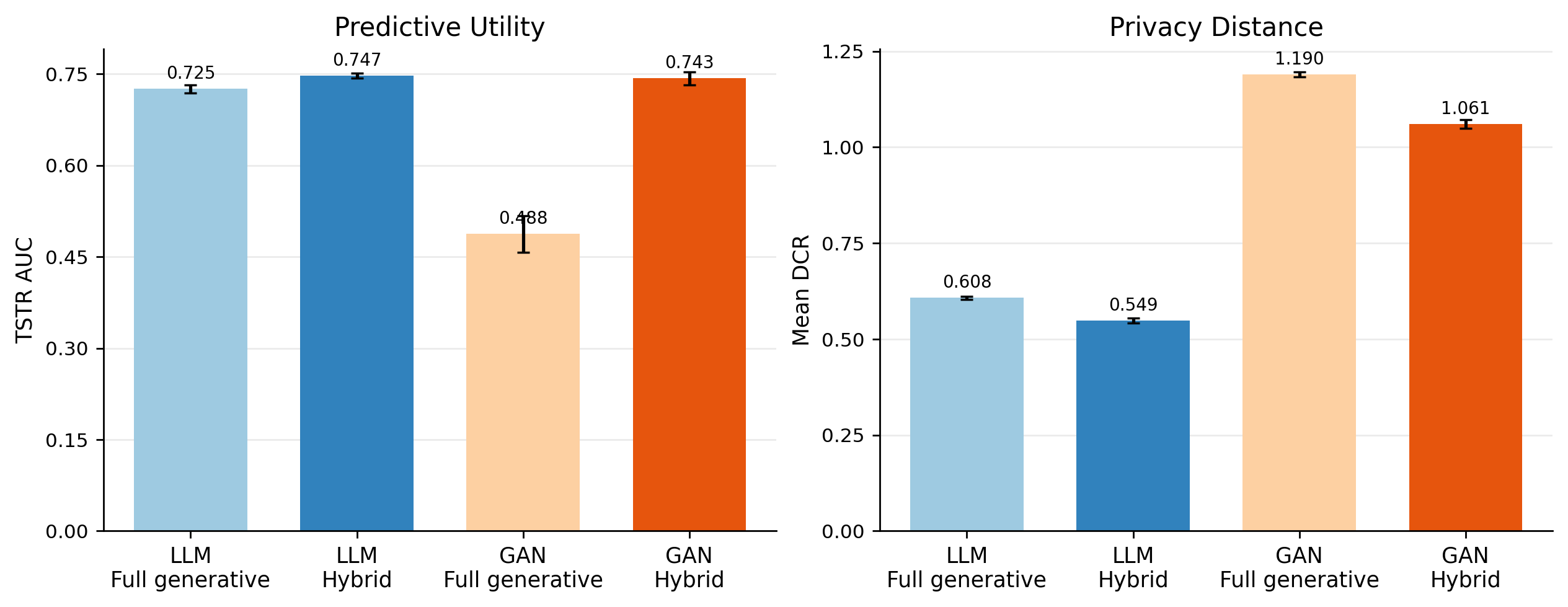}
    \end{minipage}
    \hfill
    \begin{minipage}{0.49\textwidth}
        \centering
        \includegraphics[width=\textwidth]{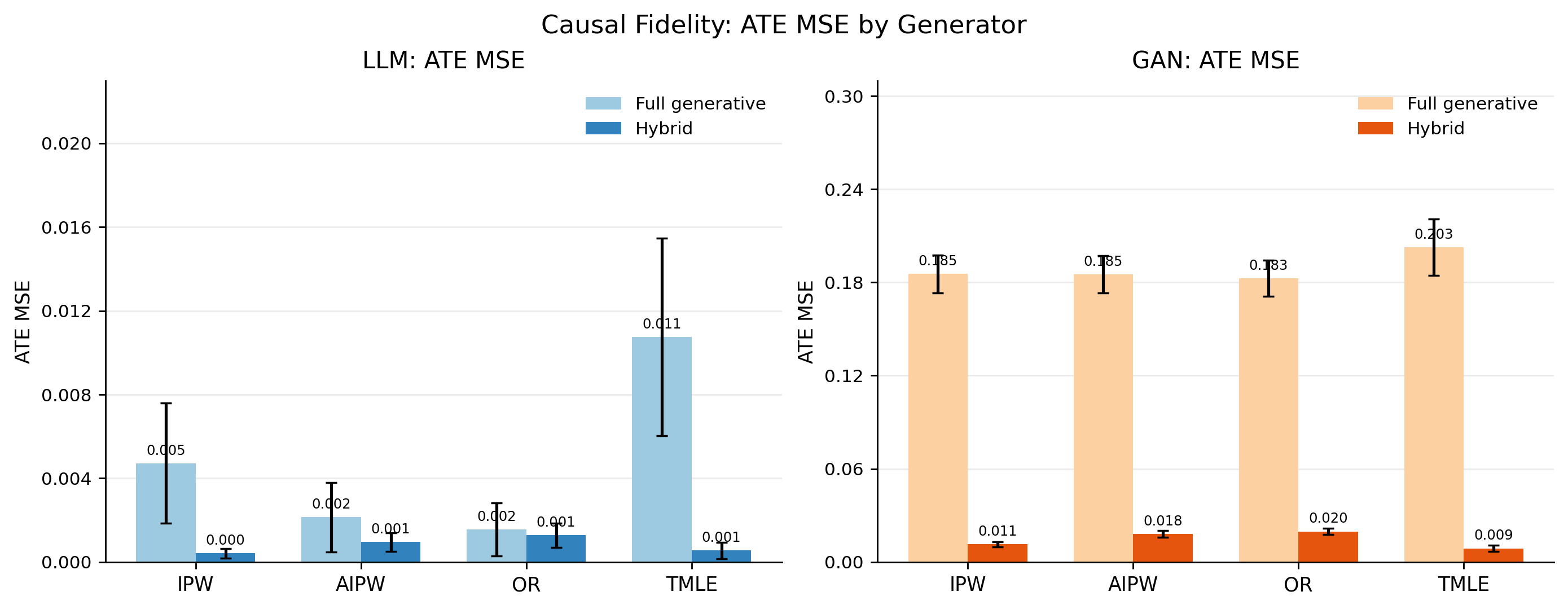}
    \end{minipage}
    \caption{Privacy and causal-fidelity diagnostics for synthetic data. Left: predictive utility (TSTR AUC) and privacy distance (mean DCR) across synthetic datasets. Right: ATE MSE across estimators, shown separately for LLM-based and GAN-based synthetic data. Hybrid constructions improve causal fidelity substantially relative to fully generative synthetic data, while DCR and TSTR alone do not fully determine causal usefulness.}
    \label{fig:privacy_parallel}
\end{figure*}

\section{Synthetic Augmentation for Practical Positivity Problems}

Positivity violations make causal estimation unstable because some treatment values are rarely observed in parts of the covariate space. In these rare $(A,W)$ regimes, the outcome regression $Q(A,W)$ can be poorly learned, and weighting-based estimators can become unstable due to extreme propensity scores. Synthetic data may help by providing plausible covariate scaffolds in sparse regions and by assigning treatments to improve balance over $(W,A)$. This does not solve structural positivity violations, but it can help when the observed data are imbalanced while the target population still contains plausible underrepresented regimes.

\Cref{prop:overlap} formalizes the tradeoff. Synthetic augmentation can reduce error by improving the conditional treatment-effect estimate, but it may also introduce error by shifting the covariate distribution.

Motivated by Proposition~\ref{prop:overlap}, we use synthetic data to support the poorly represented parts of the covariate-treatment space. Let \(p_i=\hat g(1\mid W_i)\) denote the estimated propensity score for observation \(i\). We define samples satisfying
\[
p_i < \frac{1}{\sqrt{n}\log n}
\quad\text{or}\quad
p_i > 1-\frac{1}{\sqrt{n}\log n}
\]
as extreme-propensity samples. For these samples, we search for synthetic counterparts that are close in covariate space according to Euclidean distance and pair them to improve treatment balance.

\begin{table*}[t]
\centering
\caption{MSE across positivity experiments for $n=200$. Lower is better. Values are mean $\pm$ SE across five seeds.}
\label{tab:positivity_mse_n200}
\resizebox{\textwidth}{!}{%
\begin{tabular}{lcccc}
\toprule
Scenario & IPW & AIPW & OR & TMLE \\
\midrule
Original & 0.0082 $\pm$ 0.0014 & 0.0232 $\pm$ 0.0105 & 0.0259 $\pm$ 0.0110 & 0.0104 $\pm$ 0.0035 \\

Pair Hybrid GAN & 0.0025 $\pm$ 0.0015 & 0.0034 $\pm$ 0.0018 & 0.0045 $\pm$ 0.0024 & 0.0057 $\pm$ 0.0034 \\
Pair Hybrid LLM & 0.0010 $\pm$ 0.0003 & 0.0020 $\pm$ 0.0013 & 0.0035 $\pm$ 0.0017 & 0.0042 $\pm$ 0.0015 \\
Pair Self-Supervised GAN & 0.0247 $\pm$ 0.0104 & 0.0384 $\pm$ 0.0146 & 0.0432 $\pm$ 0.0149 & 0.0190 $\pm$ 0.0109 \\
Pair Self-Supervised LLM & 0.0197 $\pm$ 0.0058 & 0.0308 $\pm$ 0.0084 & 0.0351 $\pm$ 0.0096 & 0.0111 $\pm$ 0.0042 \\

Pair Hybrid Flip 5\% GAN & 0.0043 $\pm$ 0.0033 & 0.0091 $\pm$ 0.0061 & 0.0115 $\pm$ 0.0068 & 0.0045 $\pm$ 0.0020 \\
Pair Hybrid Flip 5\% LLM & 0.0029 $\pm$ 0.0021 & 0.0072 $\pm$ 0.0044 & 0.0090 $\pm$ 0.0046 & 0.0045 $\pm$ 0.0019 \\
Pair Hybrid Flip 10\% GAN & 0.0063 $\pm$ 0.0023 & 0.0129 $\pm$ 0.0039 & 0.0157 $\pm$ 0.0045 & 0.0044 $\pm$ 0.0017 \\
Pair Hybrid Flip 10\% LLM & 0.0043 $\pm$ 0.0041 & 0.0102 $\pm$ 0.0076 & 0.0125 $\pm$ 0.0081 & 0.0061 $\pm$ 0.0037 \\
\bottomrule
\end{tabular}%
}
\end{table*}

The results in \Cref{tab:positivity_mse_n200} support this tradeoff. Pair Hybrid LLM reduces MSE relative to the original data for all four estimators: IPW decreases from \(0.0082\) to \(0.0010\), AIPW from \(0.0232\) to \(0.0020\), OR from \(0.0259\) to \(0.0035\), and TMLE from \(0.0104\) to \(0.0042\). Pair Hybrid GAN also improves over the original data, reducing IPW MSE to \(0.0025\), AIPW to \(0.0034\), OR to \(0.0045\), and TMLE to \(0.0057\). In contrast, self-supervised pairing, i.e., training \(Q\) only on the current seed data and then predicting outcomes over synthetic covariates \(W\), is less reliable and can perform worse than the original data. This suggests that synthetic covariates alone do not resolve the positivity problem: without additional outcome information or improved transfer for learning \(Q(A,W)\) in rare regimes, augmentation can fail to improve causal estimation.

The flip experiments, in which we flip the binary outcome labels used to train the outcome model for hybrid synthetic data generation, show that the benefit does not require a perfect synthetic outcome model. With $10\%$ flips, Pair Hybrid LLM still improves over the original data for all four estimators, with MSEs $0.0043$ for IPW, $0.0102$ for AIPW, $0.0125$ for OR, and $0.0061$ for TMLE. Pair Hybrid GAN with $10\%$ flips also remains below the original-data MSE for all four estimators. Additional sample-size results in \Cref{tab:positivity_mse_n100,tab:positivity_mse_n500} show the same qualitative pattern at $n=100$ and $n=500$: hybrid paired augmentation generally improves causal estimation, while self-supervised pairing is less stable. These results are consistent with Proposition~\ref{prop:overlap}: augmentation helps when the improvement in treatment-contrast learning outweighs the covariate shift or outcome noise introduced by the synthetic samples.

\section{Synthetic data as a diagnostic simulation engine for estimator evaluation}

Another use of hybrid synthetic data is estimator evaluation. In practice, analysts often need guidance before committing to a final causal analysis: which estimator is likely to be stable, whether bias or variance will dominate the error, whether a sample size such as 1000 is adequate, and whether OR, IPW, AIPW, or TMLE is safer in the observed regime. Real datasets rarely provide repeated experiments with known ground truth, while hand-crafted simulations often miss the complexity of real covariate distributions. A simulation engine powered by generative synthetic data offers a middle ground: it turns one observed dataset into a realistic finite-sample forecasting environment. We build this engine by learning a realistic covariate generator and separately learning treatment and outcome mechanisms. Once a large synthetic population is generated, we estimate a high-precision reference effect using a large synthetic sample with the doubly robust and semiparametrically efficient method TMLE and then repeatedly subsample smaller datasets to evaluate finite-sample bias, variance, RMSE, and MSE. We report the primary simulation-engine result in the main text. Additional MSE ranking diagnostics across simulation settings are reported in \Cref{tab:sim_engine_rank_agreement}.

In our experiment, the large synthetic reference sample has size 50,000, and each finite-sample replication uses 1,000 observations. \Cref{tab:real_fidelity_summary} compares the finite-sample behavior induced by synthetic data against the corresponding real-data benchmark. This comparison is intended as a diagnostic of estimator behavior in the learned synthetic environment, not as independent validation of the real causal effect. The LLM-based hybrid simulator tracks the real benchmark reasonably well: the signs of bias are correct for all estimators, and the synthetic biases remain close to the real biases. For example, IPW has real bias $-0.0012$ and synthetic bias $-0.0029$, while TMLE has real bias $0.0243$ and synthetic bias $0.0202$. AIPW and OR also show similar bias patterns, with synthetic biases $-0.0302$ and $-0.0375$ compared with real biases $-0.0263$ and $-0.0343$. Although variance magnitudes are not perfectly matched, the LLM simulator still captures informative estimator-ranking patterns. The GAN-based simulator is less faithful: GAN synthetic IPW has bias $-0.1025$ compared with real bias $-0.0012$, GAN synthetic AIPW has bias $-0.1275$ compared with real bias $-0.0263$, and TMLE even has the wrong bias sign. These discrepancies are also reflected in RMSE and MSE, where GAN synthetic errors are much larger than the real-data benchmarks for several estimators.

The results suggest that hybrid synthetic data can serve as a practical diagnostic tool for estimator selection, but the quality of the covariate generator matters. The LLM-based simulator more closely reproduces the real finite-sample ranking and error decomposition, making it more useful for anticipating estimator behavior before the final analysis. 

\begin{table*}[t!]
\centering
\caption{Real-data fidelity diagnostic summary. Synthetic performance is evaluated by how closely synthetic-data finite-sample behavior matches the real-data benchmark, using synthetic-versus-real bias, variance, RMSE, and MSE. MSE values are reported as mean $\pm$ SE.}
\label{tab:real_fidelity_summary}
\scriptsize
\setlength{\tabcolsep}{4pt}
\renewcommand{\arraystretch}{1.15}
\resizebox{\textwidth}{!}{%
\begin{tabular}{llccccccccc}
\hline
Source & Estimator & Sign correct & Real bias & Syn.\ bias & Real var & Syn.\ var & Real RMSE & Syn.\ RMSE & Real MSE & Syn.\ MSE \\
\hline
LLM & IPW  & Yes & -0.0012 & -0.0029 & 0.000279 & 0.000660 & 0.0163 & 0.0252 & 0.000267 $\pm$ 0.000065 & 0.000635 $\pm$ 0.000389 \\
LLM & TMLE & Yes &  0.0243 &  0.0202 & 0.000379 & 0.000762 & 0.0308 & 0.0337 & 0.000948 $\pm$ 0.000204 & 0.001133 $\pm$ 0.000279 \\
LLM & AIPW & Yes & -0.0263 & -0.0302 & 0.000314 & 0.000621 & 0.0315 & 0.0388 & 0.000989 $\pm$ 0.000222 & 0.001504 $\pm$ 0.000623 \\
LLM & OR   & Yes & -0.0343 & -0.0375 & 0.000315 & 0.000630 & 0.0384 & 0.0447 & 0.001478 $\pm$ 0.000277 & 0.002002 $\pm$ 0.000687 \\
\hline
GAN & IPW  & Yes & -0.0012 & -0.1025 & 0.000279 & 0.000530 & 0.0163 & 0.1050 & 0.000267 $\pm$ 0.000065 & 0.011015 $\pm$ 0.001016 \\
GAN & TMLE & No  &  0.0243 & -0.0808 & 0.000379 & 0.000892 & 0.0308 & 0.0859 & 0.000948 $\pm$ 0.000204 & 0.007380 $\pm$ 0.001103 \\
GAN & AIPW & Yes & -0.0263 & -0.1275 & 0.000314 & 0.000589 & 0.0315 & 0.1297 & 0.000989 $\pm$ 0.000222 & 0.016822 $\pm$ 0.001417 \\
GAN & OR   & Yes & -0.0343 & -0.1343 & 0.000315 & 0.000578 & 0.0384 & 0.1363 & 0.001478 $\pm$ 0.000277 & 0.018582 $\pm$ 0.001488 \\
\hline
\end{tabular}%
}
\end{table*}

\section{Real-World Applications and Limitations}

We next evaluate our framework on the ACTG175 dataset, a real-world HIV randomized trial with a continuous outcome. Unlike the earlier privacy experiment, where the target effect is known by construction, the ACTG setting is intended to assess whether synthetic data can support two realistic goals simultaneously: preserving useful predictive and causal structure, and providing a plausible simulation environment for diagnosing estimators under complex covariates in finite samples.

We construct four synthetic ACTG datasets: LLM Full, LLM Hybrid, CTGAN Full, and CTGAN Hybrid. For each synthetic dataset, we report three diagnostics: train-on-synthetic-test-on-real (TSTR) RMSE for the continuous \texttt{cd420} outcome, mean distance to the closest record (DCR), and repeated treatment-effect estimates based on subsamples of size 1000. The resulting summary is shown in \Cref{tab:actg_tstr_ate_dcr}. On predictive fidelity, both hybrid constructions improve over their fully generative counterparts: TSTR RMSE decreases from \(129.73\) to \(102.30\) for the LLM generator and from \(244.46\) to \(107.95\) for CTGAN. The DCR summaries remain similar within each generator family, with mean DCR near \(1.05\) for LLM Full and LLM Hybrid and around \(2.09\) and \(2.00\) for CTGAN Full and CTGAN Hybrid. The treatment-effect estimates also change after hybridization: for both LLM and CTGAN, the hybrid datasets produce more consistently negative ATE estimates across IPW, AIPW, and TMLE. 

\begin{table*}[t]
\centering
\caption{ACTG synthetic-data diagnostics and effect estimates. TSTR RMSE reports train-on-synthetic-test-on-real prediction error for the continuous \texttt{cd420} outcome on the real ACTG benchmark. DCR summarizes privacy distance, and ATE columns report mean estimated treatment effects over repeated subsamples of size 1000.}
\label{tab:actg_tstr_ate_dcr}
\begin{tabular}{lccccc}
\toprule
Dataset & TSTR RMSE & Mean DCR & IPW ATE & AIPW ATE & TMLE ATE \\
\midrule
LLM Full      & 129.73 & 1.0479 &  -3.1670 & -10.0869 &  -2.9390 \\
LLM Hybrid    & 102.30 & 1.0461 & -20.4772 & -14.6497 & -23.9624 \\
CTGAN Full    & 244.46 & 2.0900 &  -2.4600 & -18.7035 &  -1.4010 \\
CTGAN Hybrid  & 107.95 & 1.9969 & -13.5488 & -27.0269 & -17.5316 \\
\bottomrule
\end{tabular}
\end{table*}

Beyond point estimates, the ACTG experiment assesses whether hybrid synthetic data can serve as a realistic simulation engine for finite-sample benchmarking. We generate large hybrid pools, repeatedly subsample datasets of size 100--1000, and evaluate IPW, AIPW, outcome regression, and TMLE relative to a large-sample synthetic TMLE reference within the same hybrid environment. The resulting bias, variance, MSE, and RMSE curves are reported in \Cref{fig:actg_llm_2x2} and \Cref{fig:actg_ctgan_2x2} in the Appendix. Several consistent patterns emerge: IPW has relatively small bias but the largest variance, reflecting instability of weight-based estimators under estimated propensities; biases generally decrease with sample size relative to the large-sample TMLE reference; and AIPW usually improves over outcome regression in both MSE and bias, highlighting the benefit of doubly robust correction. As a calibration check, the Appendix shows that AIPW and TMLE are close on the large-sample LLM hybrid pool, as expected from their shared asymptotic target; this supports the synthetic world as a coherent benchmarking environment and suggests that their differences at \(n=1000\) reflect genuine finite-sample behavior. The CTGAN-based simulator shows similar qualitative trends but more irregular bias and MSE behavior than the LLM-based simulator. Overall, hybrid synthetic data show potential to forecast estimator behavior in realistic regimes where practitioners need guidance about finite-sample estimator behavior before final analysis. Because ACTG has no known ground-truth ATE, these trends should be interpreted as diagnostic evidence about the learned synthetic environment rather than confirmatory evidence of causal accuracy.

These ACTG results illustrate both the promise and the limitation of our framework. The promise is that hybrid synthetic data can convert a single real dataset into a realistic benchmarking environment that reveals estimator-specific failure modes before a final analysis is committed. The limitation is that the usefulness of this benchmark still depends critically on the quality of the learned synthetic covariate generator. In our experiments, the LLM-based hybrid simulator appears substantially more faithful than the CTGAN-based one, suggesting that not all synthetic generators are equally suitable as foundations for causal simulation. Thus, while hybrid synthetic data can support estimator evaluation in complex real-world settings, careful validation of the synthetic data generator remains essential.

\section{Conclusion}

Synthetic data for causal inference should be judged by the causal role it plays in the analysis, not only by global realism, privacy distance, or predictive utility. The central pitfall is that row-level generative objectives can preserve dominant factual patterns while leaving the treatment-effect contrast weakly constrained. This can happen when covariate reconstruction dominates the joint loss, when factual outcome prediction is largely driven by prognostic covariate-outcome structure, or when treatment assignment has limited overlap. In such cases, low joint or predictive loss may provide insufficient control of the contrast \(Q(1,W)-Q(0,W)\) that defines the ATE.

A corresponding remedy is to make outcome and contrast learning explicit. One route is to require sufficiently small factual outcome loss, so that the implicit overlap-weighted contrast component is also controlled. The hybrid strategy separates covariate generation from treatment and outcome modeling, allowing the outcome mechanism to be trained and diagnosed as an estimand-relevant component rather than as a small part of a full-row generative objective. Designed synthetic treatment assignment, such as randomized assignment, can further improve overlap in the synthetic environment and make outcome learning more aligned with the causal contrast needed for ATE estimation.

This also creates opportunities for causal workflows. Hybrid synthetic data can augment rare but plausible treatment-covariate regimes when the gain in contrast learning outweighs the induced covariate shift. It can also turn a single observed dataset into a learned simulation environment for pre-analysis estimator diagnostics, helping compare finite-sample behavior of OR, IPW, AIPW, and TMLE before final analysis. The present study focuses on binary-treatment ATE problems with fitted nuisance mechanisms. Future work should extend the same principle to multiple or continuous treatments, longitudinal regimes, stronger privacy constraints, and broader modern tabular generators. Another promising direction is to connect this tabular causal perspective with pretraining-data selection for LLMs, where data should also be selected or generated not only for global predictive fit, but for preserving the contrasts and subpopulation structure relevant to downstream decisions.

\bibliographystyle{plainnat}
\bibliography{staix_bib}

\clearpage 
\newpage
\appendix

\setcounter{figure}{0}
\renewcommand{\thefigure}{A.\arabic{figure}}

\setcounter{table}{0}
\renewcommand{\thetable}{A.\arabic{table}}

\setcounter{thm}{0}
\renewcommand{\thethm}{A.\arabic{thm}}

\setcounter{prop}{0}
\renewcommand{\theprop}{A.\arabic{prop}}

\setcounter{lem}{0}
\renewcommand{\thelem}{A.\arabic{lem}}

\setcounter{cor}{0}
\renewcommand{\thecor}{A.\arabic{cor}}

\section*{Appendix}
\label{sec:app}

\section*{Additional experiments and reproducibility details}

\subsection*{Experiments}

Tables~\ref{tab:positivity_mse_n100} and~\ref{tab:positivity_mse_n500} report additional positivity experiments at \(n=100\) and \(n=500\). Table~\ref{tab:sim_engine_rank_agreement} summarizes estimator-ranking agreement between real and synthetic finite-sample behavior across simulation settings. Figures~\ref{fig:appendix_rf_outcome_mse}--\ref{fig:appendix_tstr_dcr_vary} report additional ATE MSE, misspecification, TSTR, and DCR diagnostics used to stress-test the hybrid construction.

\Cref{fig:actg_llm_2x2,fig:actg_ctgan_2x2} provide the finite-sample ACTG simulation-engine curves corresponding to the main-text discussion. The LLM- and CTGAN-based hybrid simulators are evaluated by bias, variance, MSE, and RMSE across sample sizes for IPW, AIPW, outcome regression, and TMLE.

\begin{table*}[t]
\centering
\caption{MSE across positivity experiments for $n=100$. Lower is better. Values are mean $\pm$ SE across five seeds.}
\label{tab:positivity_mse_n100}
\resizebox{\textwidth}{!}{%
\begin{tabular}{lcccc}
\toprule
Scenario & IPW & AIPW & OR & TMLE \\
\midrule
Original & 0.0107 $\pm$ 0.0044 & 0.0438 $\pm$ 0.0181 & 0.0458 $\pm$ 0.0182 & 0.0252 $\pm$ 0.0213 \\

Pair Hybrid GAN & 0.0104 $\pm$ 0.0047 & 0.0169 $\pm$ 0.0105 & 0.0217 $\pm$ 0.0132 & 0.0130 $\pm$ 0.0059 \\
Pair Hybrid LLM & 0.0070 $\pm$ 0.0027 & 0.0116 $\pm$ 0.0025 & 0.0160 $\pm$ 0.0046 & 0.0078 $\pm$ 0.0077 \\
Pair Self-Supervised GAN & 0.0408 $\pm$ 0.0131 & 0.0608 $\pm$ 0.0145 & 0.0715 $\pm$ 0.0146 & 0.0248 $\pm$ 0.0111 \\
Pair Self-Supervised LLM & 0.0518 $\pm$ 0.0312 & 0.0585 $\pm$ 0.0266 & 0.0634 $\pm$ 0.0255 & 0.0450 $\pm$ 0.0312 \\

Pair Hybrid Flip 5\% GAN & 0.0077 $\pm$ 0.0053 & 0.0139 $\pm$ 0.0081 & 0.0185 $\pm$ 0.0096 & 0.0080 $\pm$ 0.0025 \\
Pair Hybrid Flip 5\% LLM & 0.0088 $\pm$ 0.0018 & 0.0144 $\pm$ 0.0066 & 0.0190 $\pm$ 0.0090 & 0.0113 $\pm$ 0.0058 \\
Pair Hybrid Flip 10\% GAN & 0.0202 $\pm$ 0.0136 & 0.0352 $\pm$ 0.0171 & 0.0427 $\pm$ 0.0188 & 0.0128 $\pm$ 0.0098 \\
Pair Hybrid Flip 10\% LLM & 0.0066 $\pm$ 0.0039 & 0.0134 $\pm$ 0.0061 & 0.0193 $\pm$ 0.0075 & 0.0055 $\pm$ 0.0033 \\
\bottomrule
\end{tabular}%
}
\end{table*}

\begin{table*}[t]

\centering
\caption{MSE across positivity experiments for $n=500$. Lower is better. Values are mean $\pm$ SE across five seeds.}
\label{tab:positivity_mse_n500}
\resizebox{\textwidth}{!}{%
\begin{tabular}{lcccc}
\toprule
Scenario & IPW & AIPW & OR & TMLE \\
\midrule
Original & 0.0031 $\pm$ 0.0017 & 0.0287 $\pm$ 0.0122 & 0.0311 $\pm$ 0.0126 & 0.0061 $\pm$ 0.0034 \\

Pair Hybrid GAN & 0.0030 $\pm$ 0.0009 & 0.0058 $\pm$ 0.0024 & 0.0071 $\pm$ 0.0028 & 0.0031 $\pm$ 0.0012 \\
Pair Hybrid LLM & 0.0023 $\pm$ 0.0012 & 0.0047 $\pm$ 0.0020 & 0.0062 $\pm$ 0.0025 & 0.0015 $\pm$ 0.0006 \\
Pair Self-Supervised GAN & 0.0359 $\pm$ 0.0142 & 0.0468 $\pm$ 0.0164 & 0.0508 $\pm$ 0.0169 & 0.0284 $\pm$ 0.0135 \\
Pair Self-Supervised LLM & 0.0299 $\pm$ 0.0127 & 0.0384 $\pm$ 0.0142 & 0.0415 $\pm$ 0.0147 & 0.0244 $\pm$ 0.0117 \\

Pair Hybrid Flip 5\% GAN & 0.0059 $\pm$ 0.0031 & 0.0101 $\pm$ 0.0037 & 0.0121 $\pm$ 0.0041 & 0.0025 $\pm$ 0.0016 \\
Pair Hybrid Flip 5\% LLM & 0.0056 $\pm$ 0.0020 & 0.0102 $\pm$ 0.0028 & 0.0126 $\pm$ 0.0033 & 0.0021 $\pm$ 0.0009 \\
Pair Hybrid Flip 10\% GAN & 0.0080 $\pm$ 0.0021 & 0.0135 $\pm$ 0.0023 & 0.0160 $\pm$ 0.0025 & 0.0035 $\pm$ 0.0011 \\
Pair Hybrid Flip 10\% LLM & 0.0067 $\pm$ 0.0030 & 0.0108 $\pm$ 0.0035 & 0.0127 $\pm$ 0.0038 & 0.0034 $\pm$ 0.0019 \\
\bottomrule
\end{tabular}%
}
\end{table*}

\begin{table*}[t!]
\centering
\caption{Estimator MSE ranking agreement between real and synthetic finite-sample behavior. For each setting and source, the table reports the real-data MSE rank, synthetic-data MSE rank, and pairwise rank agreement across the four estimators. Lower MSE is better. Pairwise accuracy is computed over the six estimator pairs per setting.}
\label{tab:sim_engine_rank_agreement}
\scriptsize
\setlength{\tabcolsep}{4pt}
\renewcommand{\arraystretch}{1.15}
\resizebox{\textwidth}{!}{%
\begin{tabular}{llccll}
\hline
Setting & Source & Correct pairs & Accuracy & Real MSE rank & Synthetic MSE rank \\
\hline
$d=20$, poor overlap, complex outcome & LLM & 6/6 & 100.0\% & IPW $<$ TMLE $<$ AIPW $<$ OR & IPW $<$ TMLE $<$ AIPW $<$ OR \\
 & GAN & 5/6 & 83.3\% & IPW $<$ TMLE $<$ AIPW $<$ OR & TMLE $<$ IPW $<$ AIPW $<$ OR \\
\hline

$d=20$, poor overlap, simple outcome & LLM & 5/6 & 83.3\% & IPW $<$ TMLE $<$ AIPW $<$ OR & TMLE $<$ IPW $<$ AIPW $<$ OR \\
 & GAN & 5/6 & 83.3\% & IPW $<$ TMLE $<$ AIPW $<$ OR & TMLE $<$ IPW $<$ AIPW $<$ OR \\
\hline

$d=20$, poor overlap, complex outcome, $n_{\mathrm{seed}}=500$ & LLM & 5/6 & 83.3\% & IPW $<$ TMLE $<$ AIPW $<$ OR & IPW $<$ AIPW $<$ TMLE $<$ OR \\
 & GAN & 5/6 & 83.3\% & IPW $<$ TMLE $<$ AIPW $<$ OR & TMLE $<$ IPW $<$ AIPW $<$ OR \\
\hline

$d=6$, moderate overlap, complex outcome & LLM & 5/6 & 83.3\% & AIPW $<$ OR $<$ IPW $<$ TMLE & OR $<$ AIPW $<$ IPW $<$ TMLE \\
 & GAN & 2/6 & 33.3\% & AIPW $<$ OR $<$ IPW $<$ TMLE & IPW $<$ TMLE $<$ AIPW $<$ OR \\
\hline

$d=6$, poor overlap, complex outcome & LLM & 5/6 & 83.3\% & AIPW $<$ OR $<$ IPW $<$ TMLE & OR $<$ AIPW $<$ IPW $<$ TMLE \\
 & GAN & 6/6 & 100.0\% & AIPW $<$ OR $<$ IPW $<$ TMLE & AIPW $<$ OR $<$ IPW $<$ TMLE \\
\hline

\textbf{Overall} & \textbf{LLM} & \textbf{26/30} & \textbf{86.7\%} & -- & -- \\
\textbf{Overall} & \textbf{GAN} & \textbf{23/30} & \textbf{76.7\%} & -- & -- \\
\hline
\end{tabular}%
}
\end{table*}

\begin{figure*}[t]
\centering

\begin{minipage}{0.48\textwidth}
    \centering
    \includegraphics[width=\textwidth]{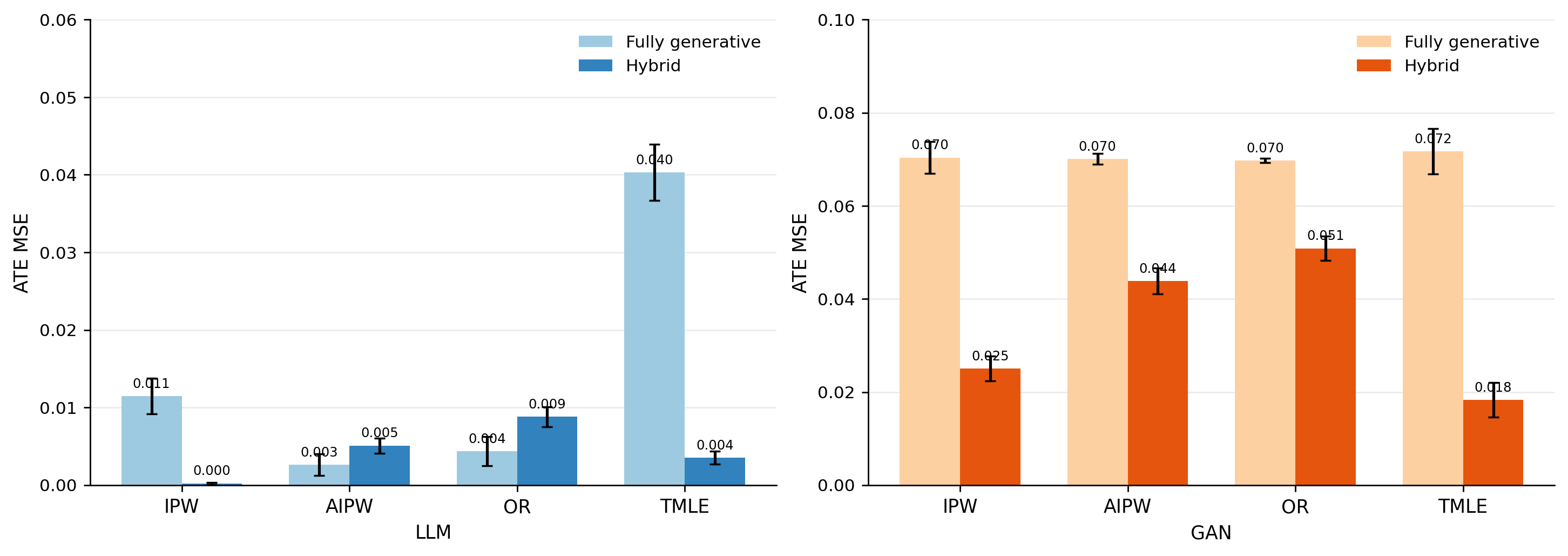}
    {\small $d=20$, $n_{\mathrm{seed}}=1000$, poor overlap, complex outcome}
\end{minipage}
\hfill
\begin{minipage}{0.48\textwidth}
    \centering
    \includegraphics[width=\textwidth]{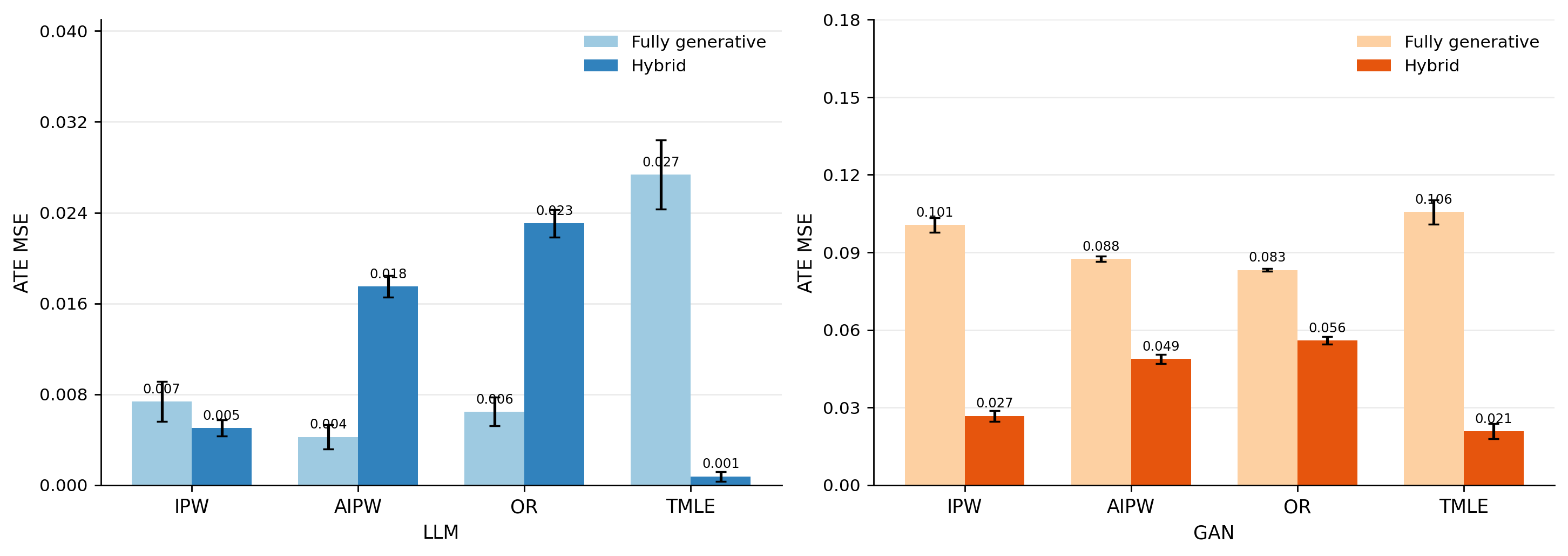}
    {\small $d=20$, $n_{\mathrm{seed}}=1000$, poor overlap, simple outcome}
\end{minipage}

\vspace{0.6em}

\begin{minipage}{0.48\textwidth}
    \centering
    \includegraphics[width=\textwidth]{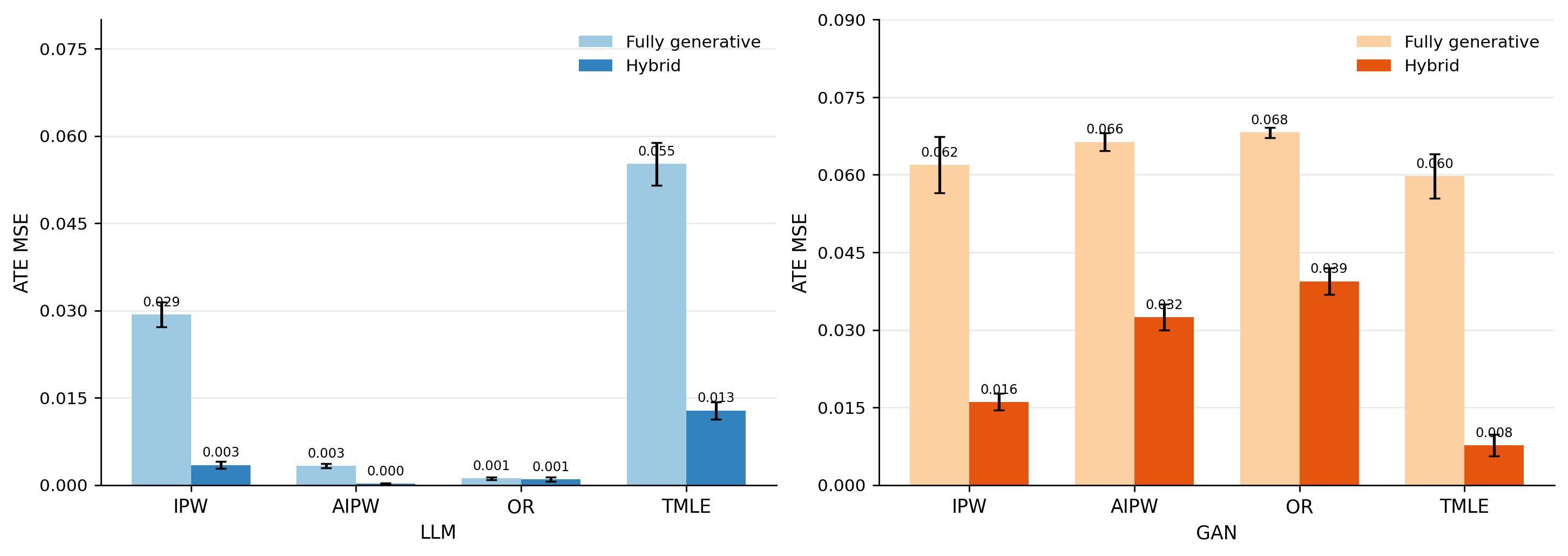}
    {\small $d=20$, $n_{\mathrm{seed}}=500$, poor overlap, complex outcome}
\end{minipage}
\hfill
\begin{minipage}{0.48\textwidth}
    \centering
    \includegraphics[width=\textwidth]{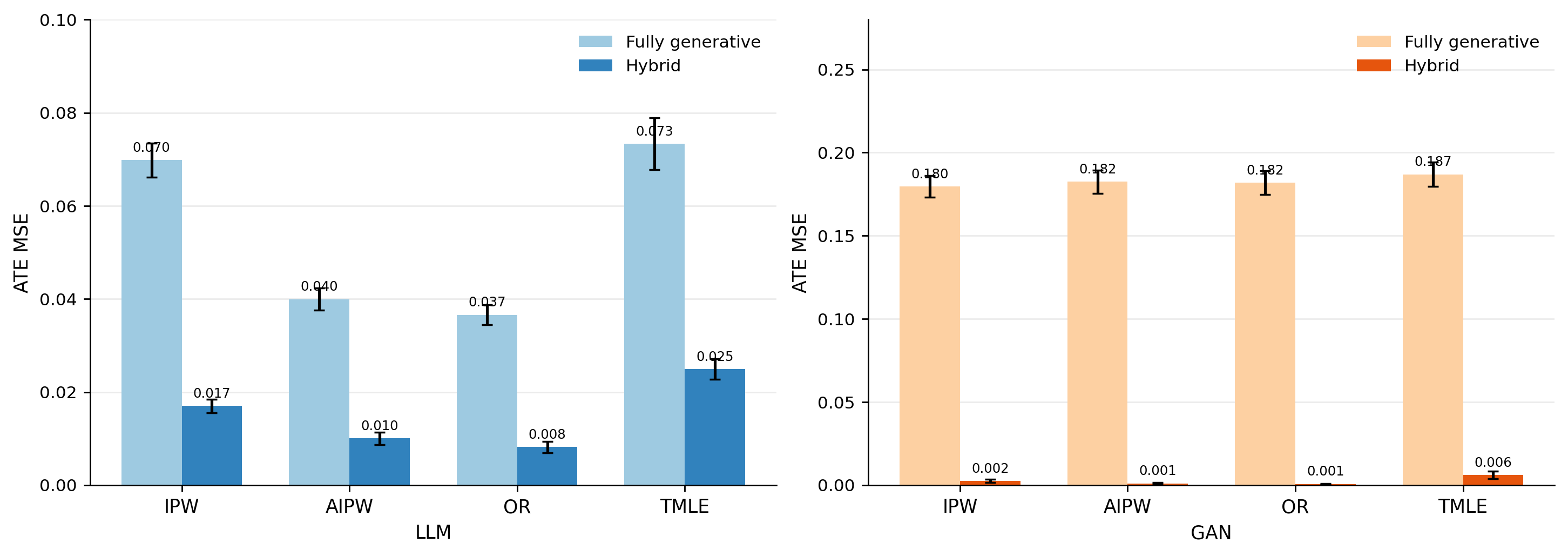}
    {\small $d=6$, $n_{\mathrm{seed}}=1000$, poor overlap, complex outcome}
\end{minipage}

\vspace{0.6em}

\begin{minipage}{0.55\textwidth}
    \centering
    \includegraphics[width=\textwidth]{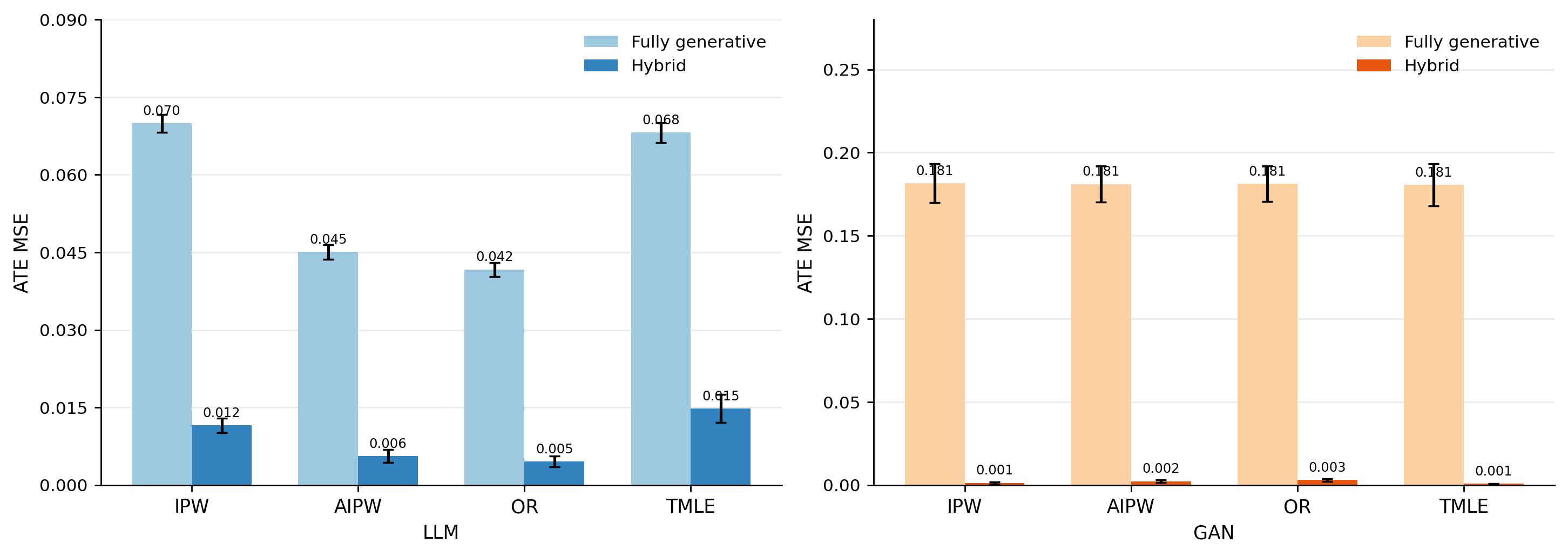}
    {\small $d=6$, $n_{\mathrm{seed}}=1000$, moderate overlap, complex outcome}
\end{minipage}

\caption{Additional ATE MSE stress tests using random-forest nuisance learners. Lower is better. Results vary covariate dimension, seed sample size, overlap strength, and outcome complexity. Hybrid synthesis gives the largest gains in complex regimes, while fully generative LLM synthesis can be competitive in the simpler setting.}
\label{fig:appendix_rf_outcome_mse}
\end{figure*}

\begin{figure*}[t]
\centering

\begin{minipage}{0.48\textwidth}
    \centering
    \includegraphics[width=\textwidth]{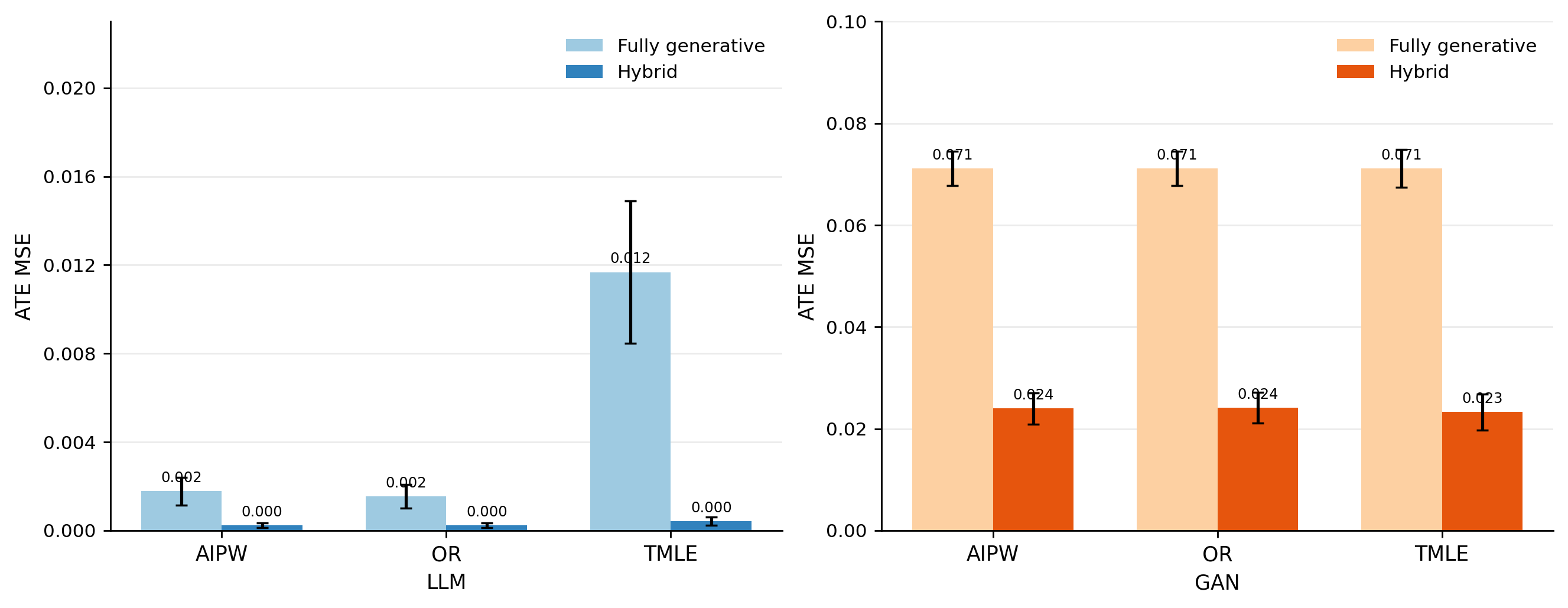}
    {\small $d=20$, $n_{\mathrm{seed}}=1000$, poor overlap, complex outcome}
\end{minipage}
\hfill
\begin{minipage}{0.48\textwidth}
    \centering
    \includegraphics[width=\textwidth]{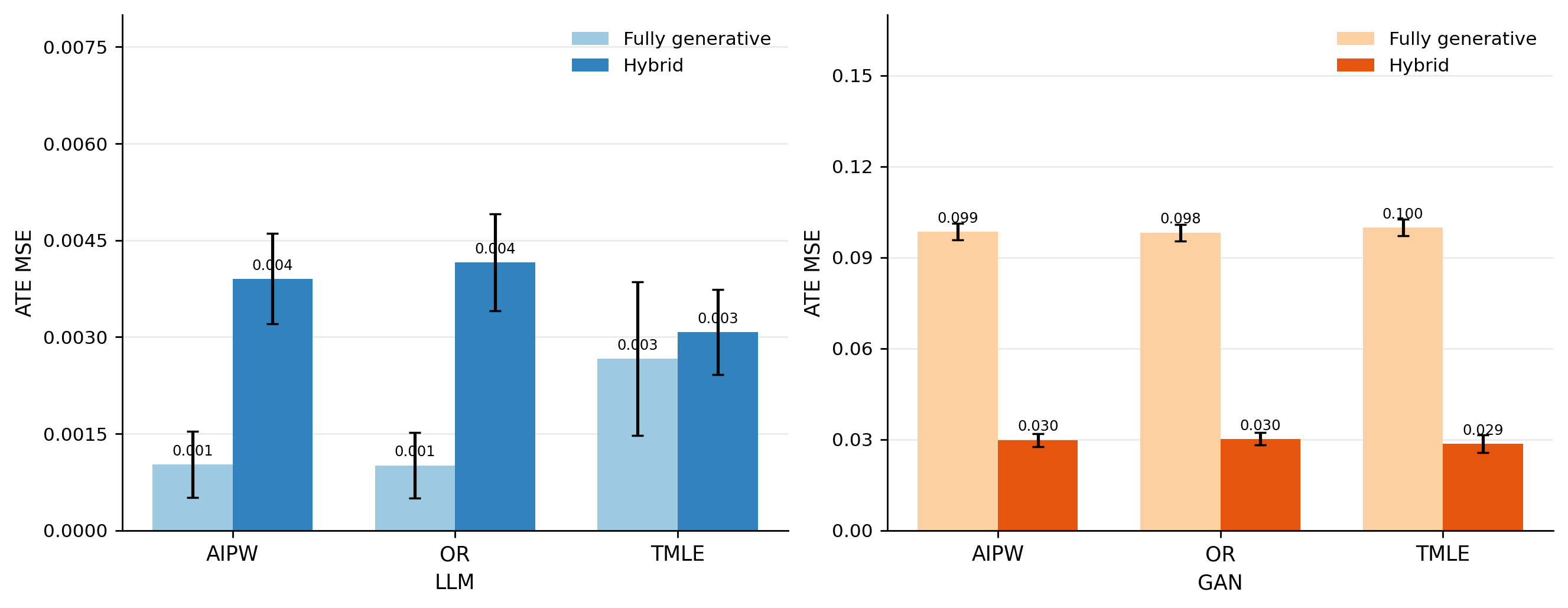}
    {\small $d=20$, $n_{\mathrm{seed}}=1000$, poor overlap, simple outcome}
\end{minipage}

\vspace{0.6em}

\begin{minipage}{0.48\textwidth}
    \centering
    \includegraphics[width=\textwidth]{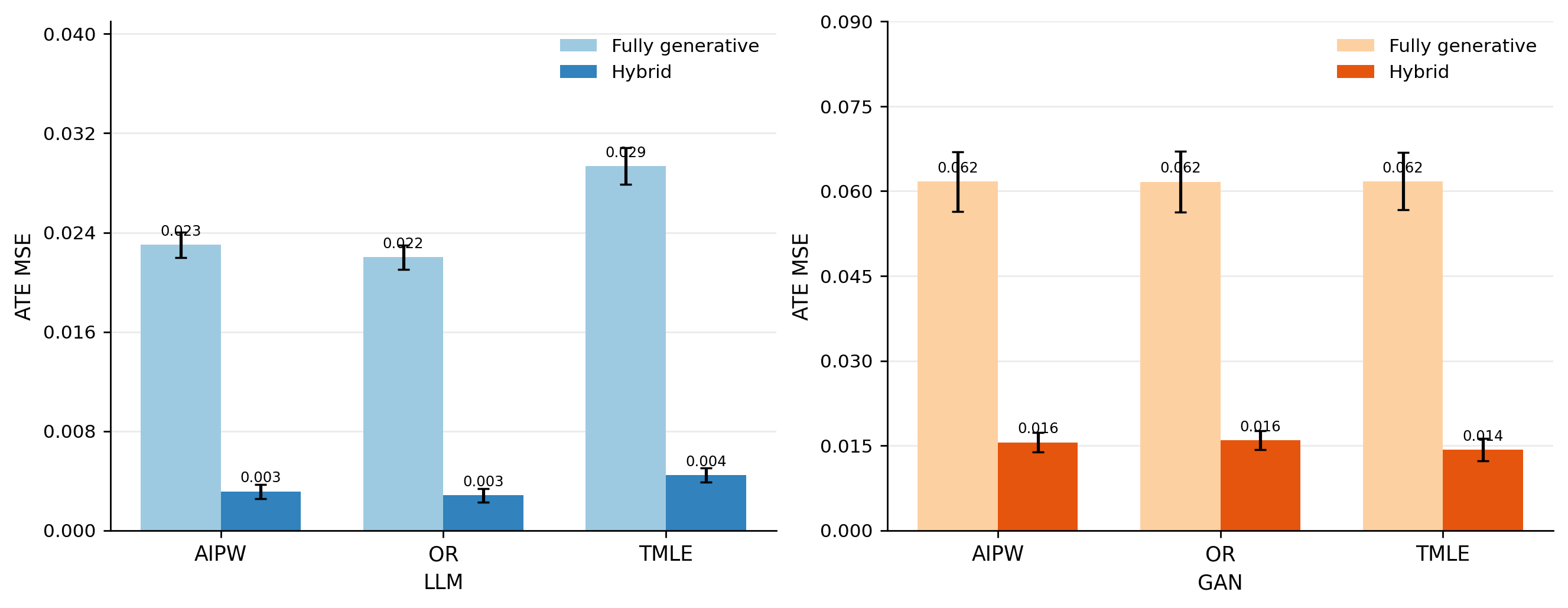}
    {\small $d=20$, $n_{\mathrm{seed}}=500$, poor overlap, complex outcome}
\end{minipage}
\hfill
\begin{minipage}{0.48\textwidth}
    \centering
    \includegraphics[width=\textwidth]{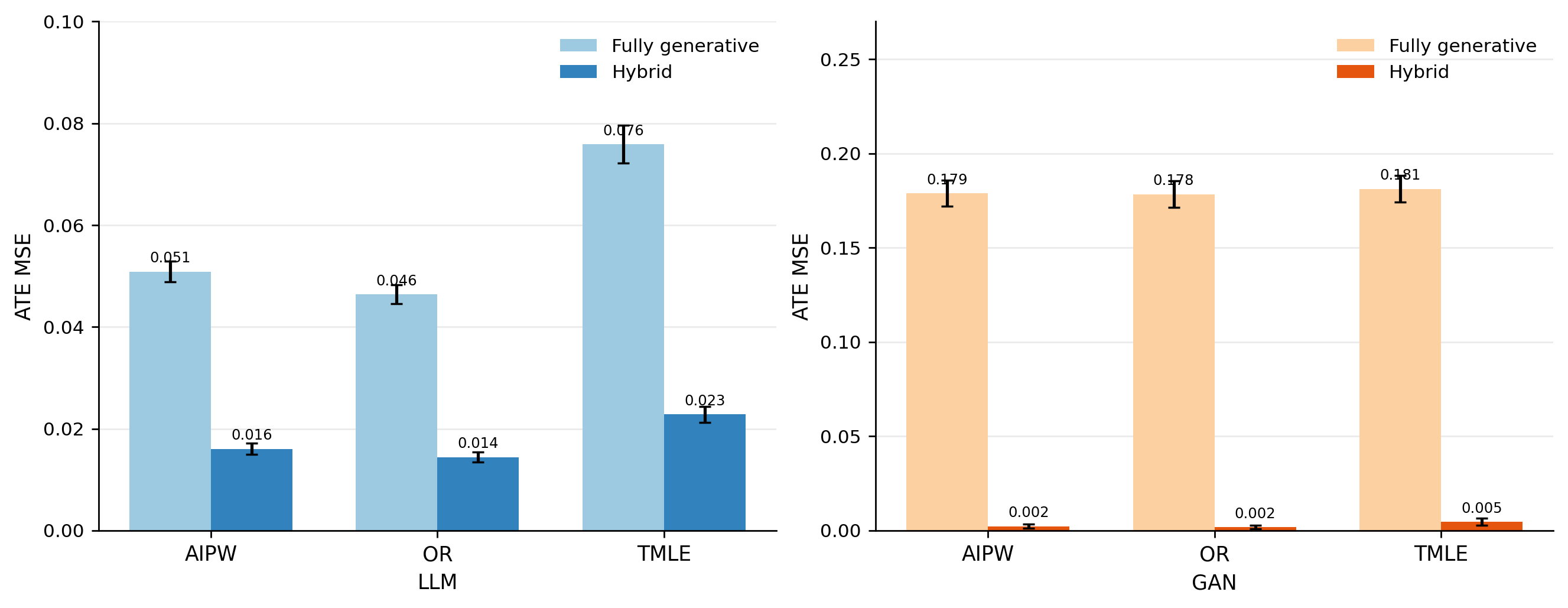}
    {\small $d=6$, $n_{\mathrm{seed}}=1000$, poor overlap, complex outcome}
\end{minipage}

\vspace{0.6em}

\begin{minipage}{0.55\textwidth}
    \centering
    \includegraphics[width=\textwidth]{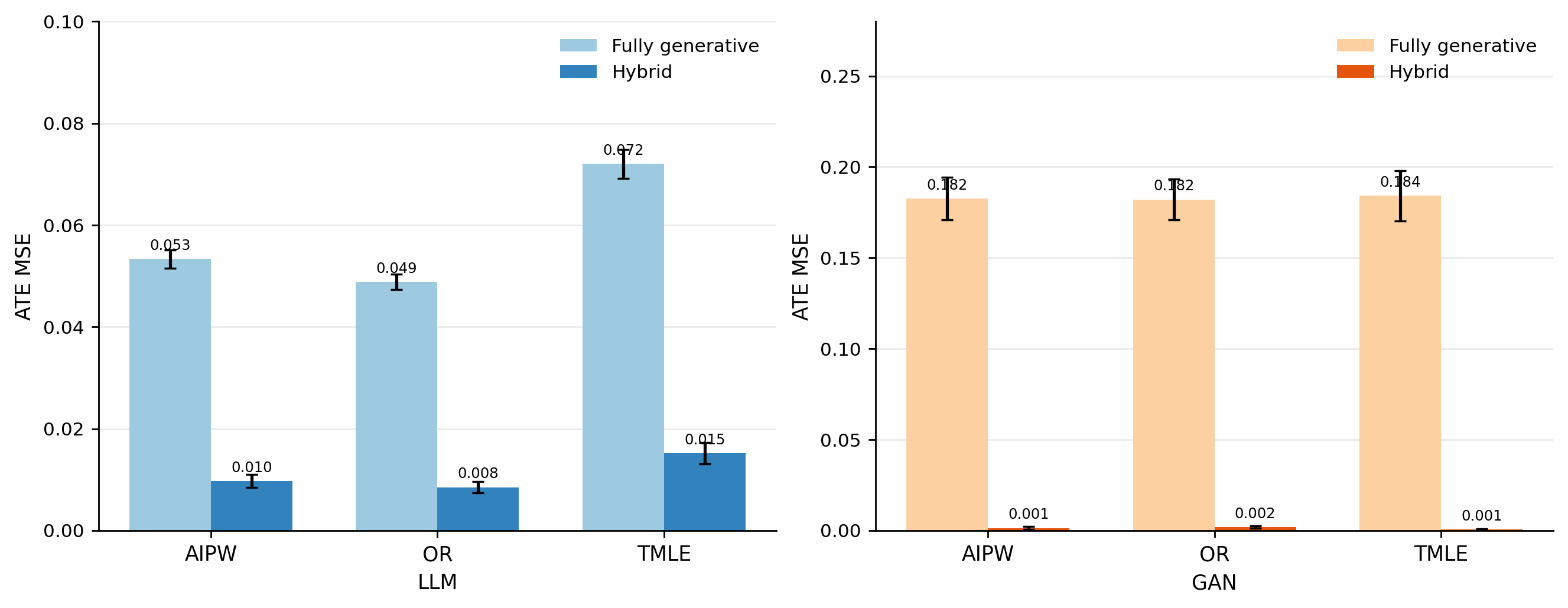}
    {\small $d=6$, $n_{\mathrm{seed}}=1000$, moderate overlap, complex outcome}
\end{minipage}

\caption{Additional ATE MSE stress tests using logistic outcome learners for outcome regression, AIPW, and TMLE. IPW is omitted because it does not use an outcome model. Lower is better. The qualitative pattern remains similar to the main random-forest comparison: hybrid synthesis is most beneficial in complex or limited-overlap settings, while fully generative LLM synthesis remains competitive in the simpler outcome-model setting.}
\label{fig:appendix_logistic_outcome_mse}
\end{figure*}

\begin{figure*}[t]
\centering

\begin{minipage}{0.48\textwidth}
    \centering
    \includegraphics[width=\textwidth]{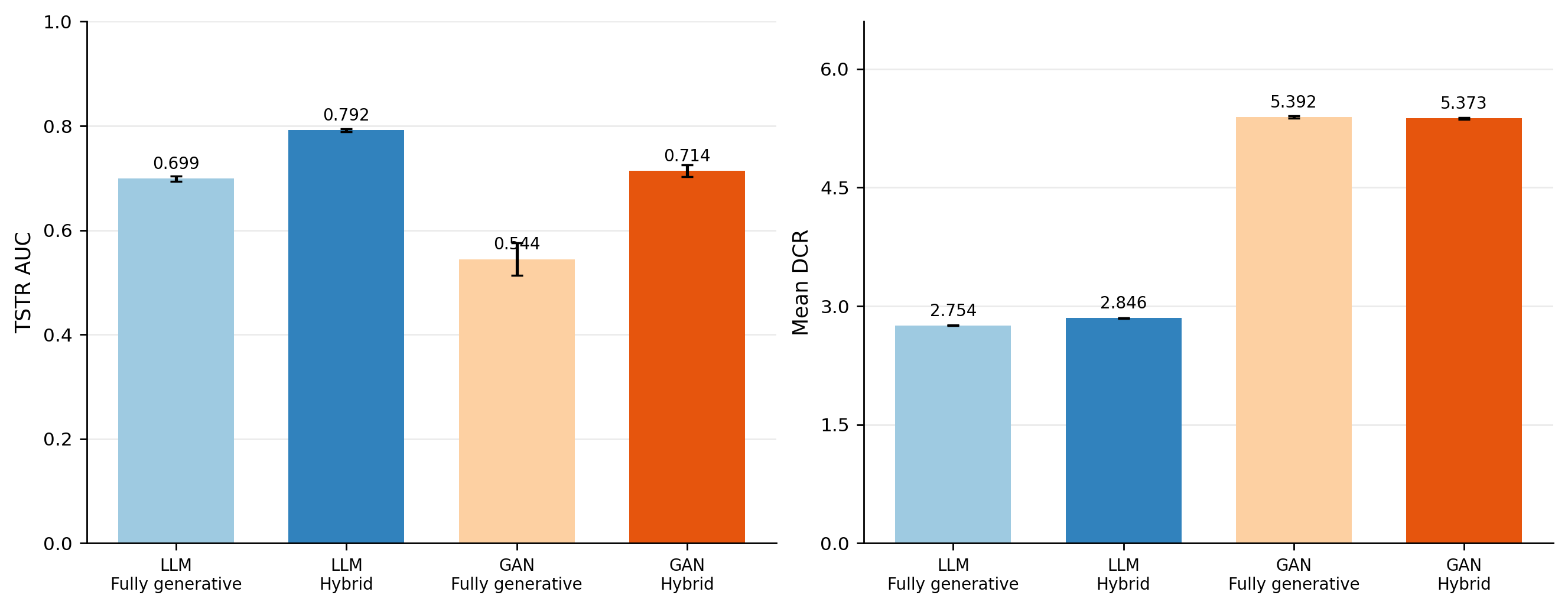}
    {\small $d=20$, $n_{\mathrm{seed}}=1000$, poor overlap, complex}
\end{minipage}
\hfill
\begin{minipage}{0.48\textwidth}
    \centering
    \includegraphics[width=\textwidth]{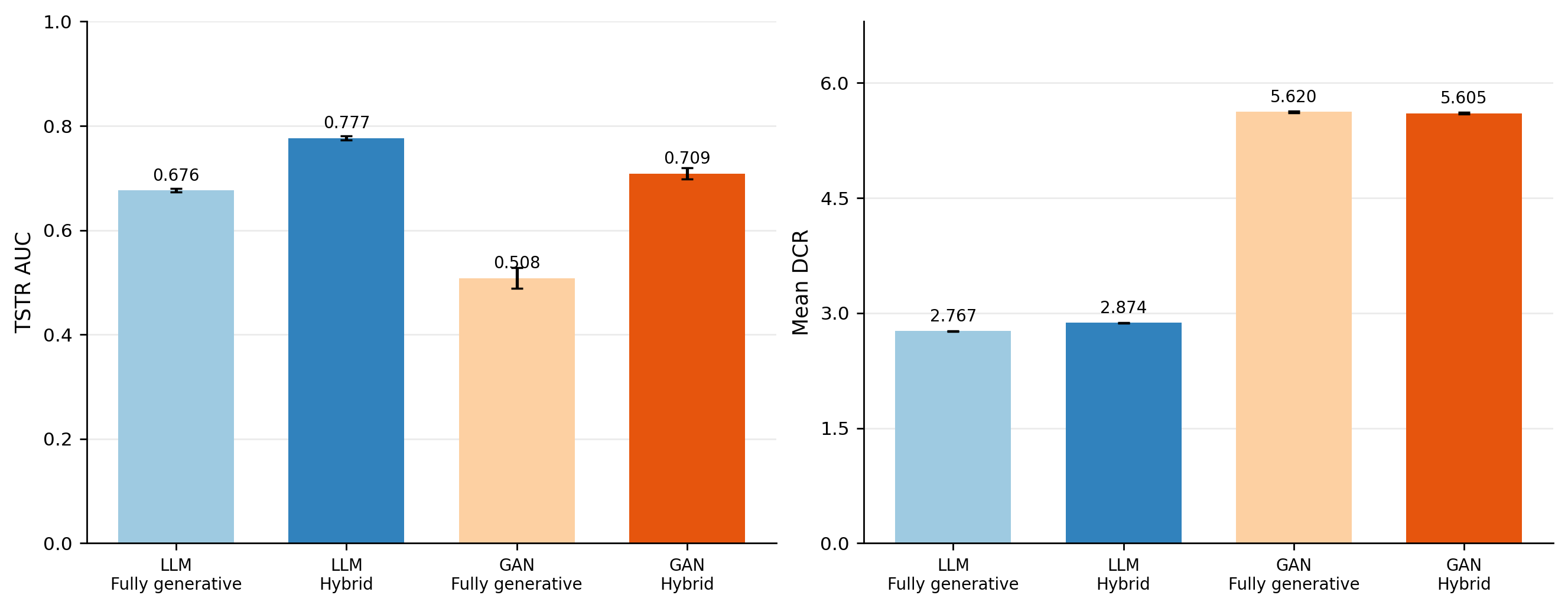}
    {\small $d=20$, $n_{\mathrm{seed}}=1000$, poor overlap, simple}
\end{minipage}

\vspace{0.6em}

\begin{minipage}{0.48\textwidth}
    \centering
    \includegraphics[width=\textwidth]{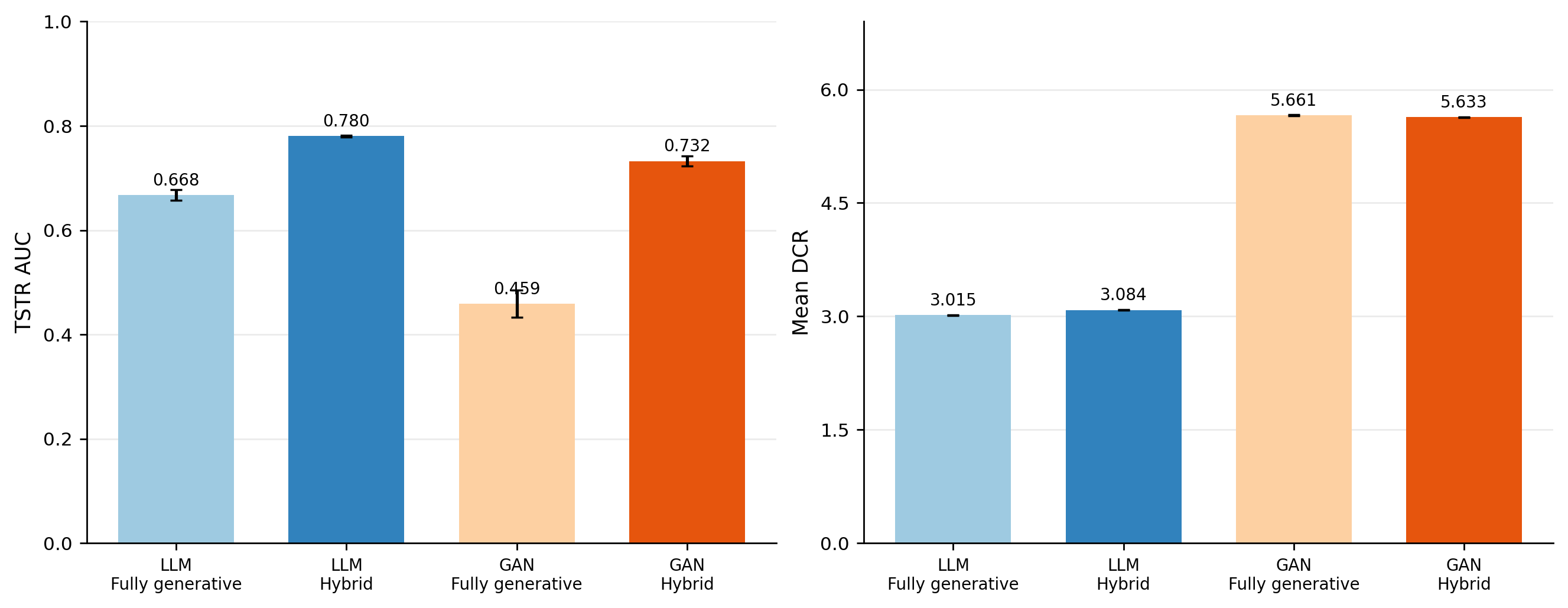}
    {\small $d=20$, $n_{\mathrm{seed}}=500$, poor overlap, complex}
\end{minipage}
\hfill
\begin{minipage}{0.48\textwidth}
    \centering
    \includegraphics[width=\textwidth]{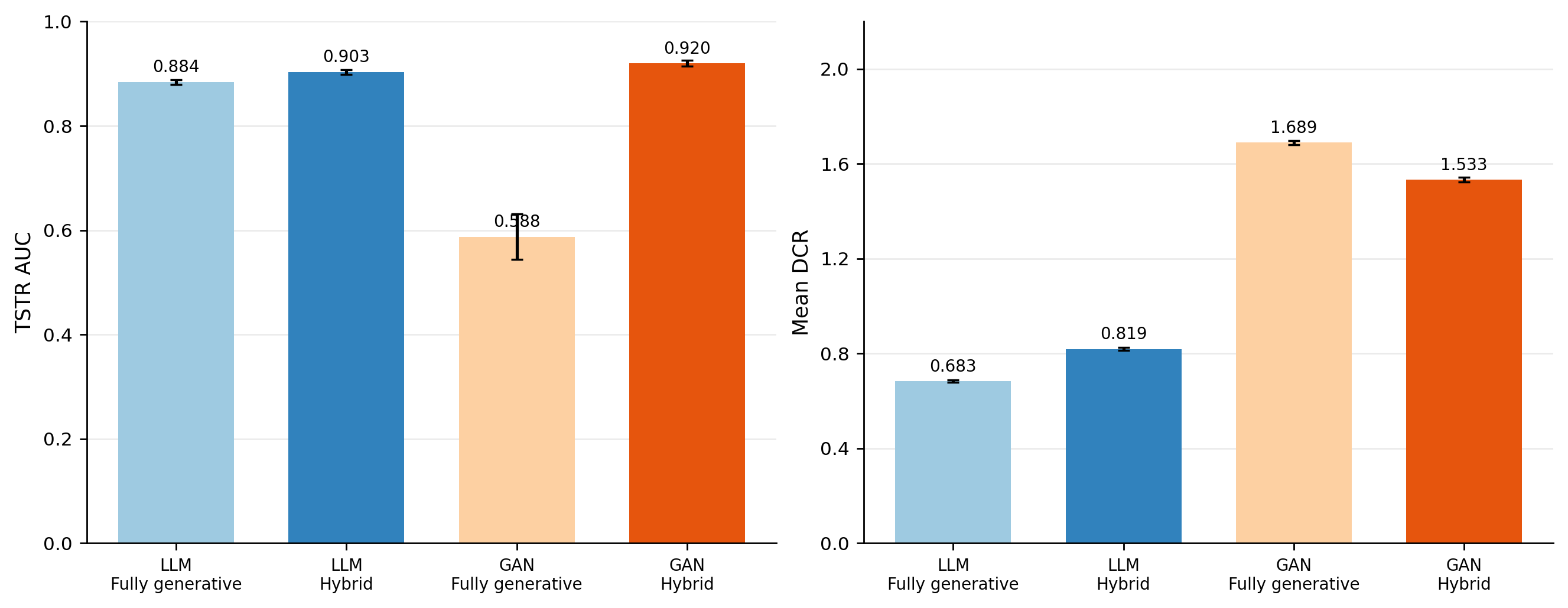}
    {\small $d=6$, $n_{\mathrm{seed}}=1000$, poor overlap, complex}
\end{minipage}

\vspace{0.6em}

\begin{minipage}{0.55\textwidth}
    \centering
    \includegraphics[width=\textwidth]{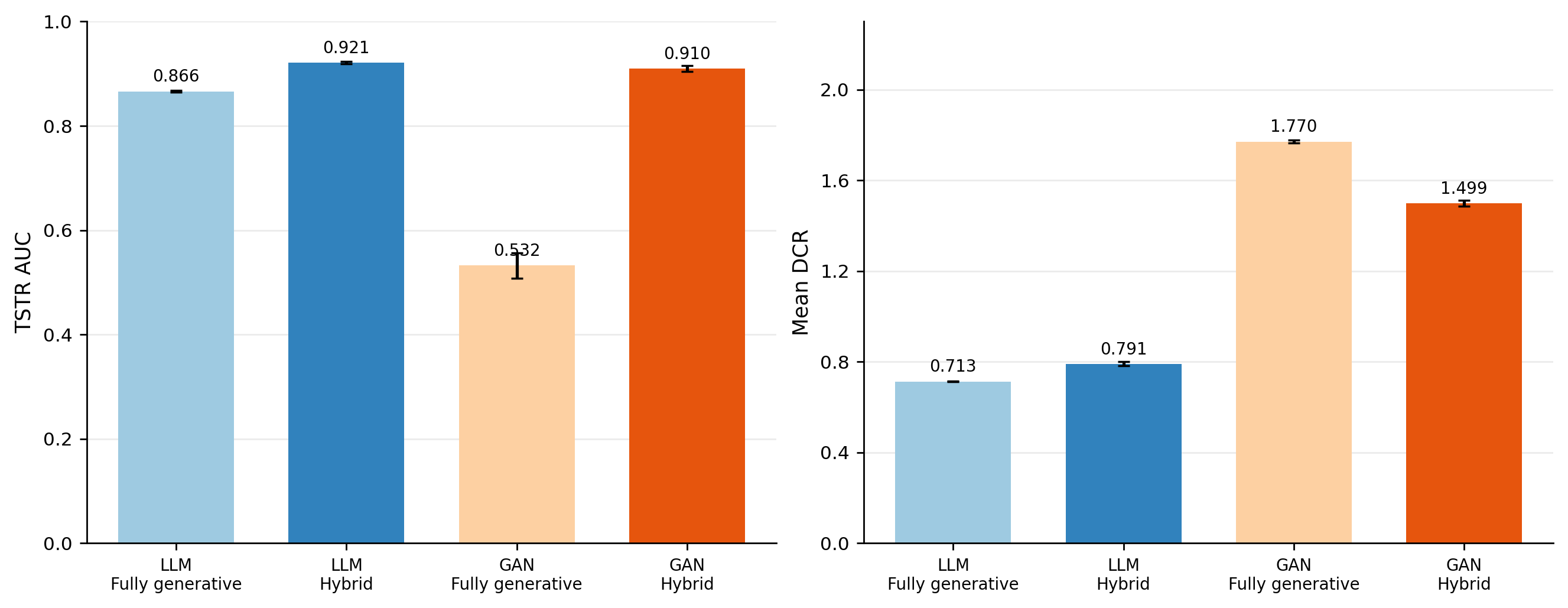}
    {\small $d=6$, $n_{\mathrm{seed}}=1000$, moderate overlap, complex}
\end{minipage}

\caption{TSTR and DCR diagnostics across simulation settings.}
\label{fig:appendix_tstr_dcr_vary}
\end{figure*}

\begin{figure*}[t]
    \centering
    \begin{minipage}{0.48\textwidth}
        \centering
        \includegraphics[width=\textwidth]{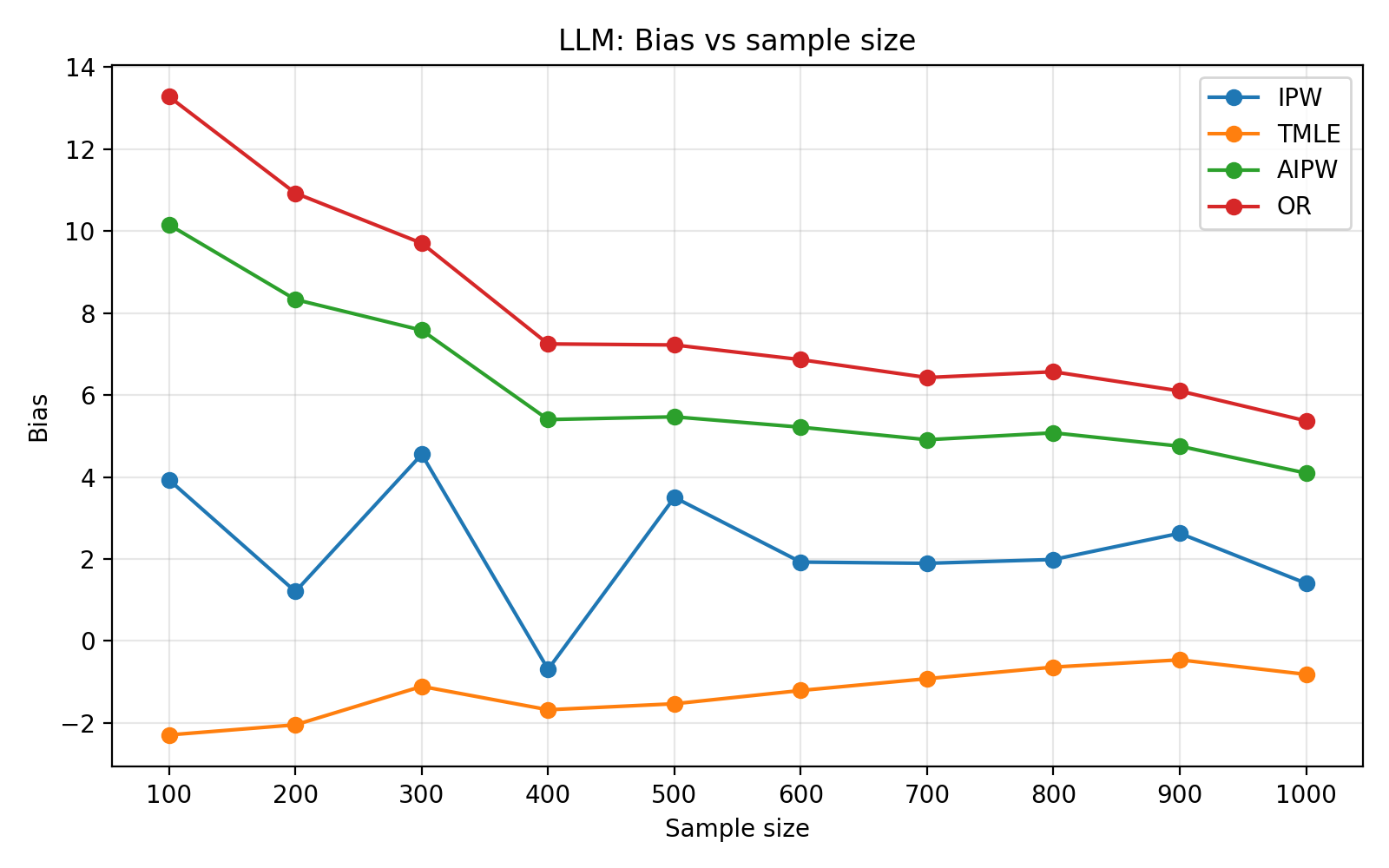}
    \end{minipage}
    \hfill
    \begin{minipage}{0.48\textwidth}
        \centering
        \includegraphics[width=\textwidth]{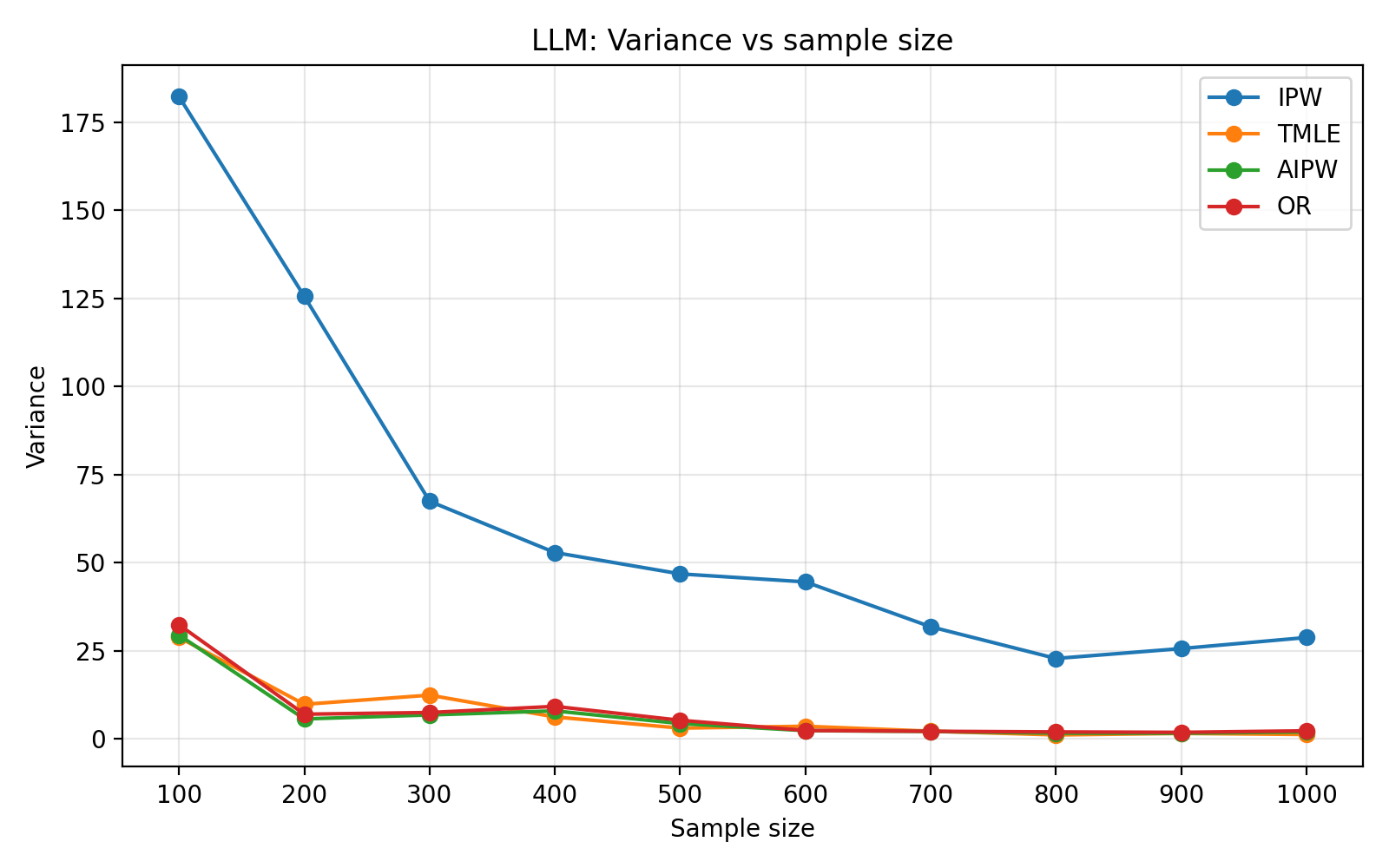}
    \end{minipage}

    \vspace{0.6em}

    \begin{minipage}{0.48\textwidth}
        \centering
        \includegraphics[width=\textwidth]{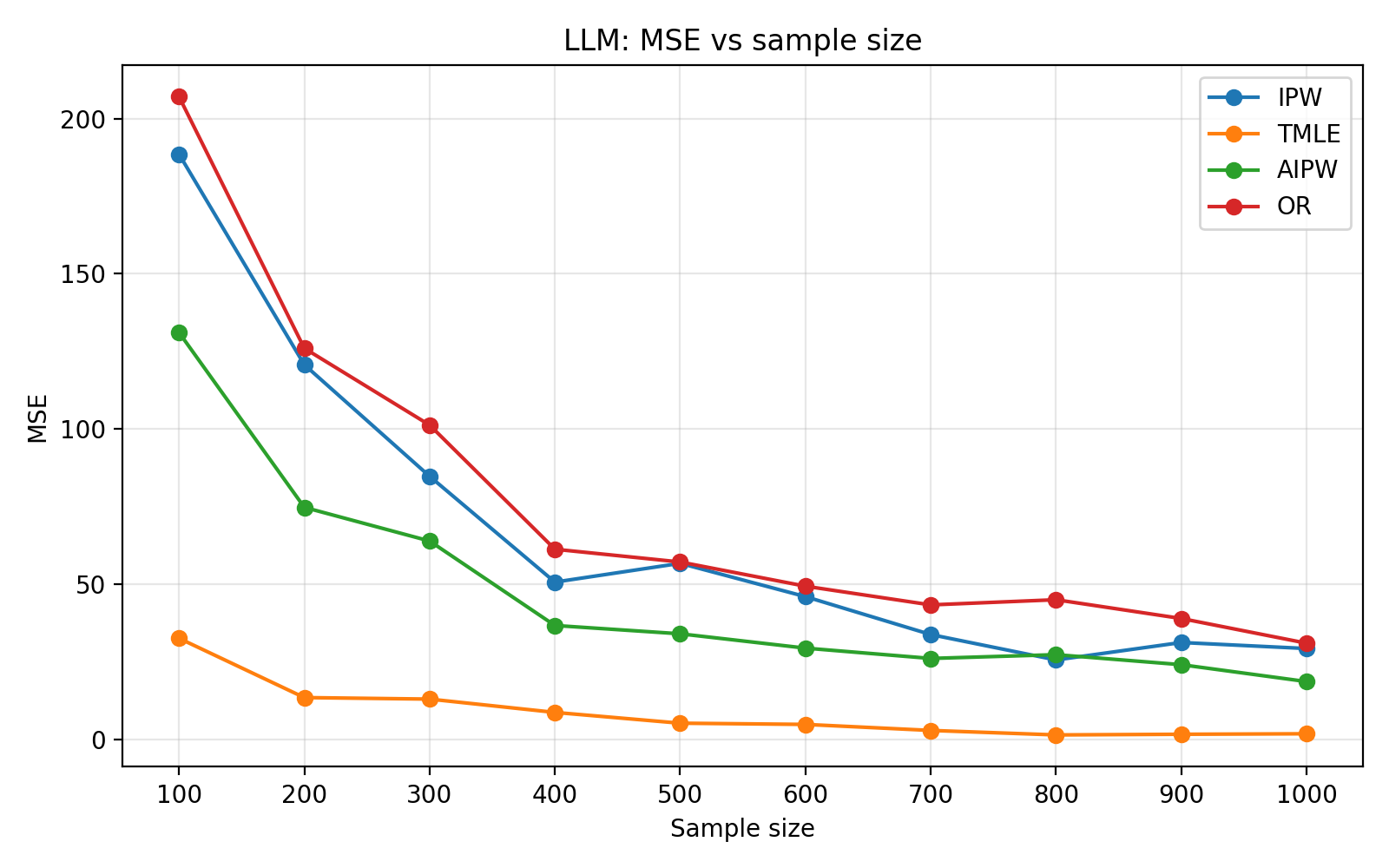}
    \end{minipage}
    \hfill
    \begin{minipage}{0.48\textwidth}
        \centering
        \includegraphics[width=\textwidth]{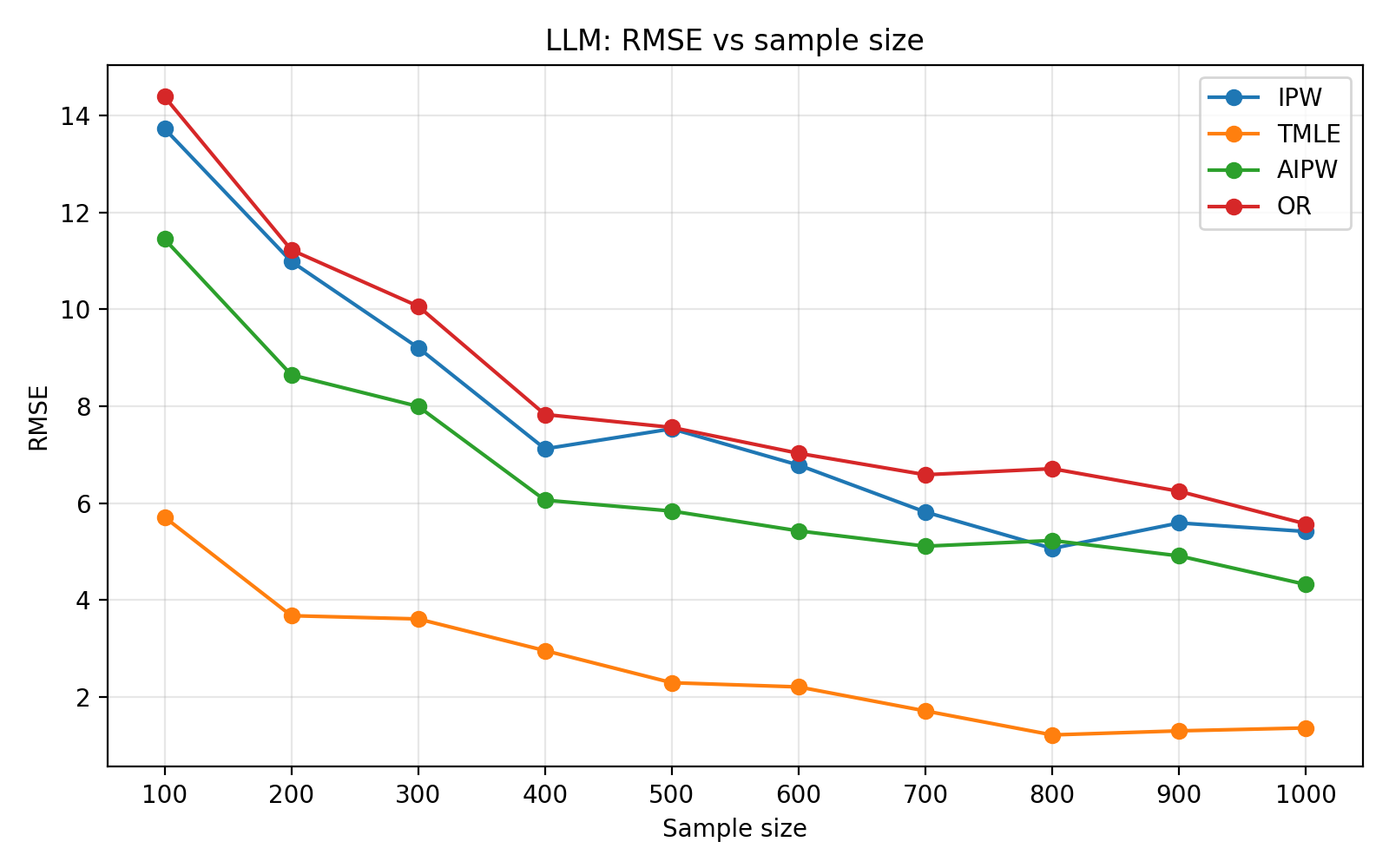}
    \end{minipage}
    \caption{Finite-sample benchmarking results under the LLM-based hybrid ACTG simulator. Panels report bias, variance, MSE, and RMSE as functions of sample size for IPW, AIPW, outcome regression, and TMLE.}
    \label{fig:actg_llm_2x2}
\end{figure*}

\begin{figure*}[t]
    \centering
    \begin{minipage}{0.48\textwidth}
        \centering
        \includegraphics[width=\textwidth]{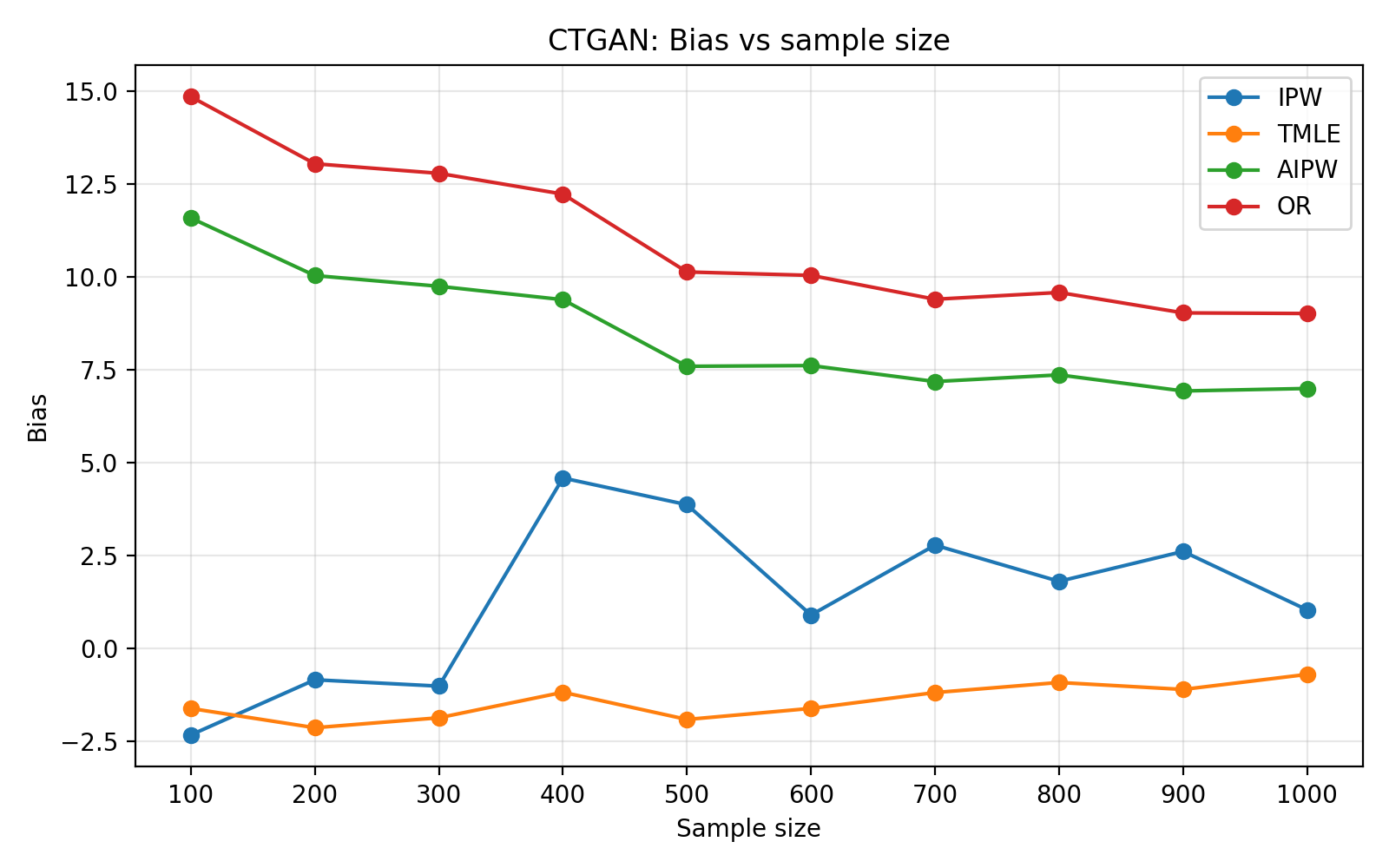}
    \end{minipage}
    \hfill
    \begin{minipage}{0.48\textwidth}
        \centering
        \includegraphics[width=\textwidth]{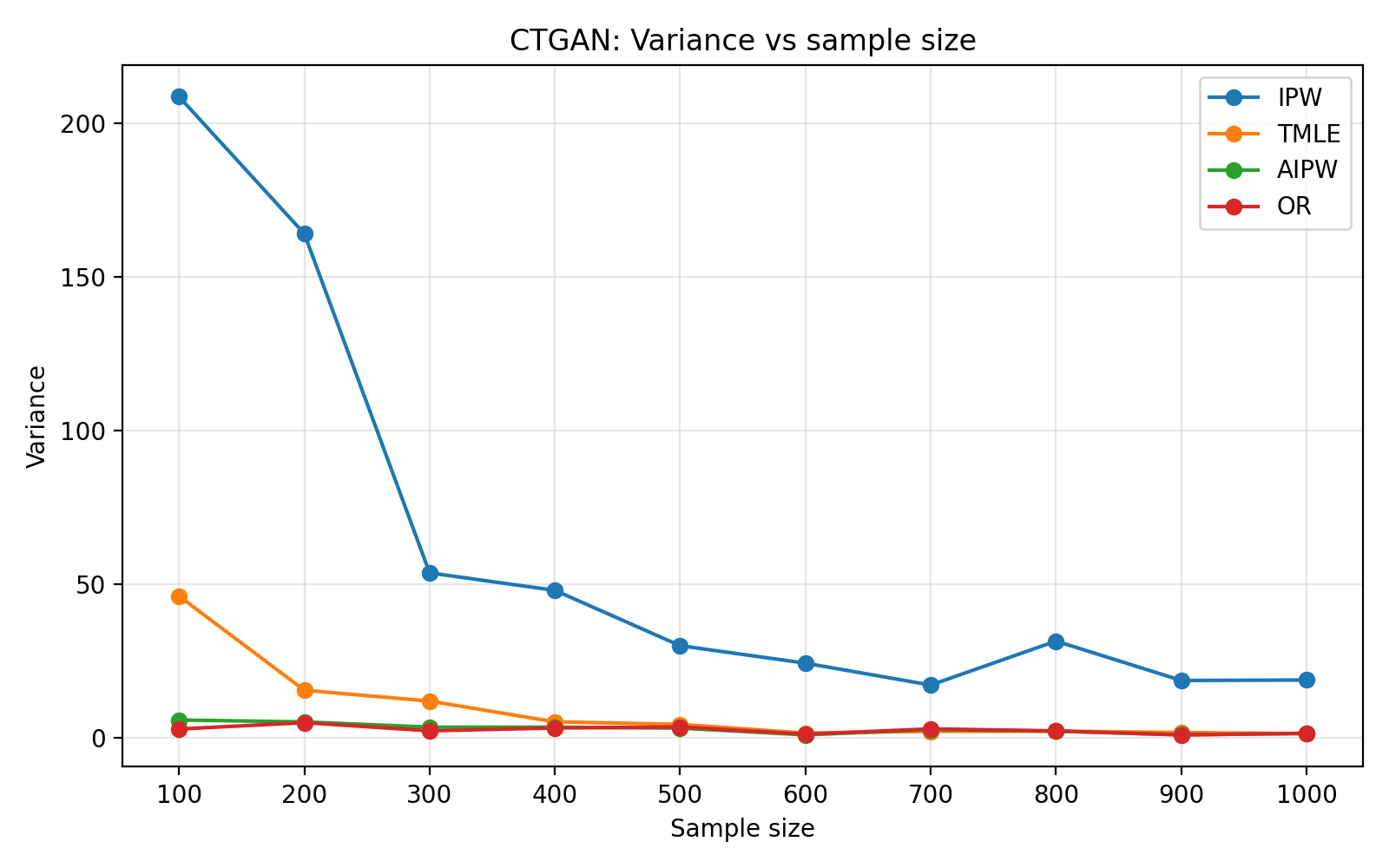}
    \end{minipage}

    \vspace{0.6em}

    \begin{minipage}{0.48\textwidth}
        \centering
        \includegraphics[width=\textwidth]{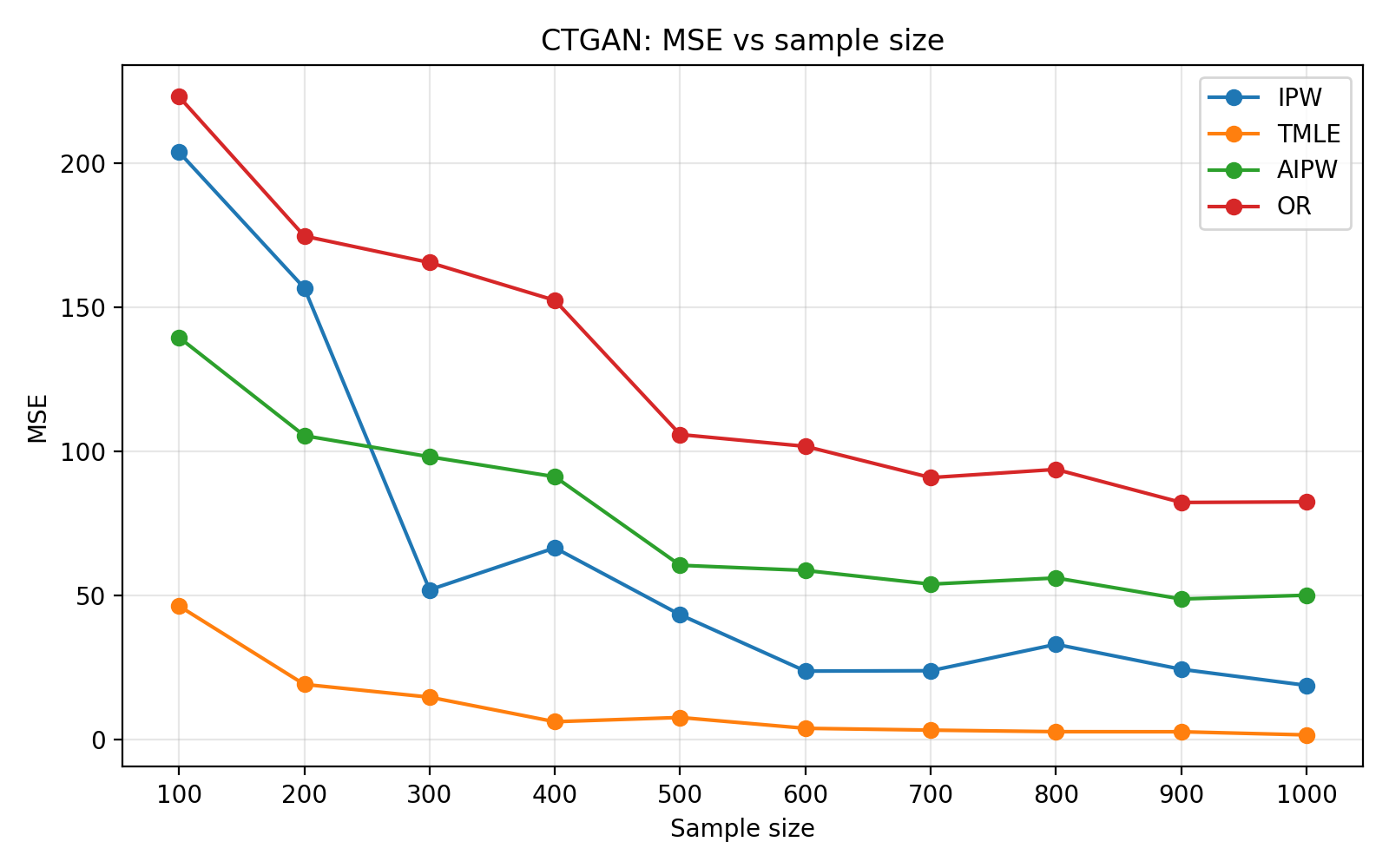}
    \end{minipage}
    \hfill
    \begin{minipage}{0.48\textwidth}
        \centering
        \includegraphics[width=\textwidth]{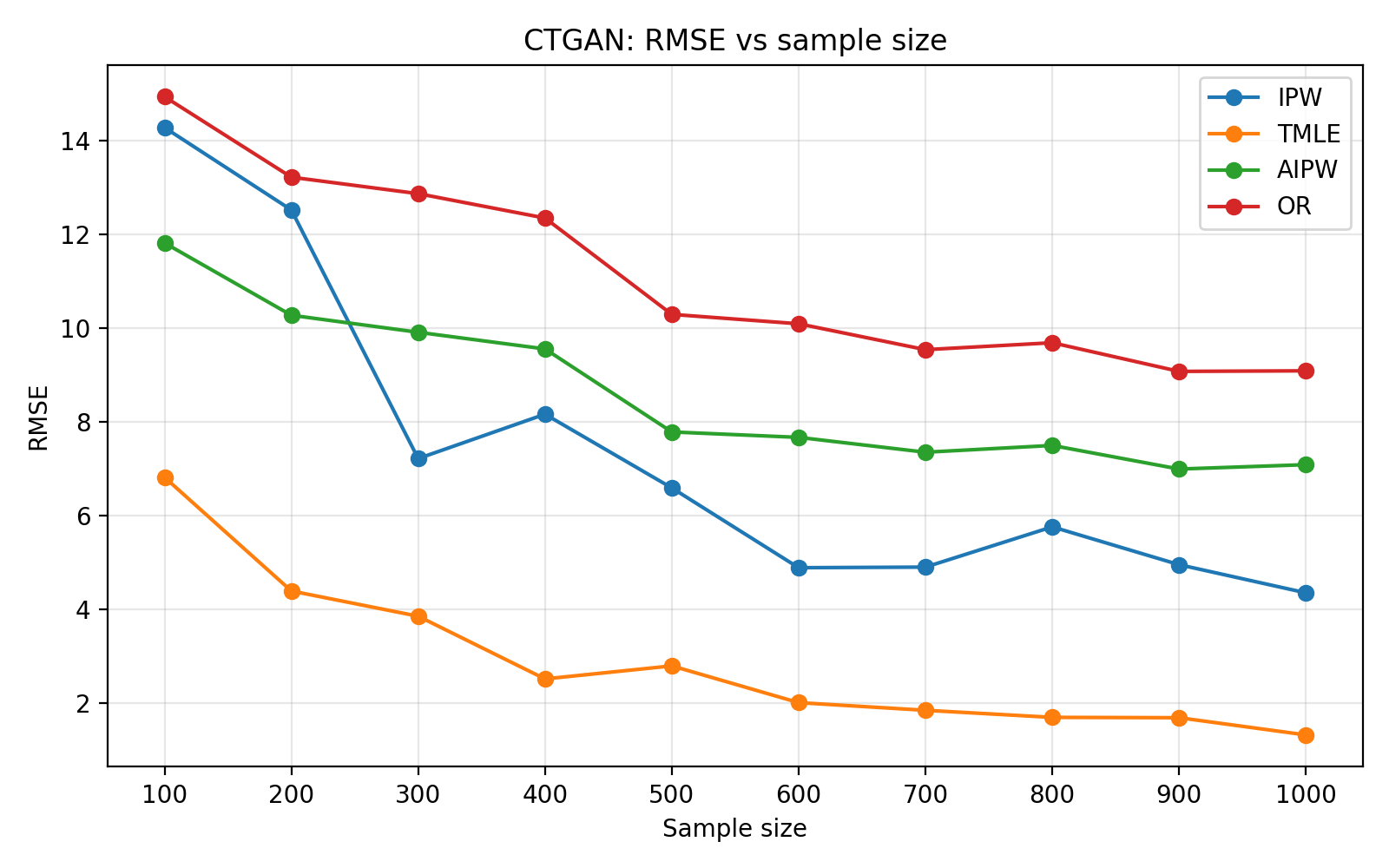}
    \end{minipage}
    \caption{Finite-sample benchmarking results under the CTGAN-based hybrid ACTG simulator. Panels report bias, variance, MSE, and RMSE as functions of sample size for IPW, AIPW, outcome regression, and TMLE.}
    \label{fig:actg_ctgan_2x2}
\end{figure*}

As an additional calibration check, we evaluated all four estimators on the full LLM hybrid synthetic pool before studying finite-sample subsamples. The full LLM synthetic pool contains $49{,}749$ observations. The large-sample TMLE estimate is $-23.5240$, while the corresponding large-sample estimates for IPW, AIPW, and outcome regression are $-21.5921$, $-22.8787$, and $-22.7074$, respectively. Thus, AIPW and TMLE are close on the full synthetic pool, differing by only $0.6453$, and outcome regression is also within $0.8166$ of TMLE. IPW differs more from TMLE, by $1.9319$; this is not unexpected because IPW is a weight-based estimator and a single large synthetic pool can still exhibit non-negligible fluctuation due to high-variance weights. This calibration helps interpret the finite-sample simulation results. Since AIPW and TMLE are close on the full synthetic pool, the large positive AIPW bias observed at sample size $n=1000$ is unlikely to be explained merely by AIPW targeting a substantially different large-sample estimand. Instead, it reflects finite-sample instability of AIPW in this simulated ACTG environment. At $n=1000$, the AIPW mean remains $4.0963$ units above the full-pool TMLE reference and $3.4509$ units above its own full-pool AIPW estimate. Outcome regression shows an even larger finite-sample discrepancy, remaining $5.3706$ units above the full-pool TMLE reference and $4.5539$ units above its own full-pool estimate. By contrast, TMLE is much closer to its large-sample reference at $n=1000$, with bias $-0.8140$ and MSE $1.8414$. IPW is comparatively close to its own full-pool estimate at $n=1000$, but its MSE remains large, consistent with the high-variance behavior of weighting estimators. These results support the role of the synthetic simulator as a finite-sample diagnostic: estimators that agree in the large synthetic population can still behave very differently at realistic sample sizes.

\subsection*{Reproducibility}

We describe the simulation design and implementation details. The synthetic-data experiments use binary treatment \(A\), binary outcome \(Y\), and covariates \(W=(W_1,\ldots,W_d)\). We consider six settings: the primary main-text setting and five additional stress-test settings.

\[
\begin{array}{ll}
\text{(i)}   & d=6,\ n_{\mathrm{seed}}=1000,\ \text{randomized seed treatment, complex treatment effect},\\
\text{(ii)}  & d=6,\ n_{\mathrm{seed}}=1000,\ \text{moderate overlap, complex treatment effect},\\
\text{(iii)} & d=6,\ n_{\mathrm{seed}}=1000,\ \text{poor overlap, complex treatment effect},\\
\text{(iv)}  & d=20,\ n_{\mathrm{seed}}=1000,\ \text{poor overlap, complex treatment effect},\\
\text{(v)}   & d=20,\ n_{\mathrm{seed}}=500,\ \text{poor overlap, complex treatment effect},\\
\text{(vi)}  & d=20,\ n_{\mathrm{seed}}=1000,\ \text{poor overlap, simple treatment effect}.
\end{array}
\]

In each setting, the reference population has size \(50{,}000\), and the test set used for TSTR evaluation has size \(1000\). The observational datasets in the positivity experiments have sizes \(100\), \(200\), and \(500\), while the ATE preservation experiments use synthetic datasets of size \(1000\). The true ATE is computed by averaging the difference between treated and control outcome probabilities over the large reference population. The \(d=6\) settings include three binary and three continuous covariates, with \(W_3\) depending on \(W_1\) and \(W_2\), and \(W_6\) depending on \(W_4\) and \(W_5\). The primary main-text setting uses randomized treatment assignment and a nonlinear heterogeneous treatment effect. The five stress-test settings vary overlap, dimension, seed sample size, and treatment-effect complexity. The \(d=20\) settings use a nonlinear dependent covariate system with shared latent factors, correlated Gaussian noise, and nuisance covariates correlated with outcome-relevant variables.

For fully generative LLM synthesis, we use GReaT~\citep{borisov23} with GPT-2. The model is trained on the seed table containing covariates, treatment, and outcome for 50 epochs with batch size 32. Rows are serialized using the GReaT tabular-text representation with randomized feature order, and generated text is converted back to tabular format after sampling. For CTGAN synthesis, we train CTGAN~\citep{xu19} on the same seed table for 50 epochs, specifying binary variables as discrete columns and the remaining variables as continuous columns. For both generators, generated samples are postprocessed to enforce the expected column order, numeric types, and valid binary support.

Hybrid datasets use the corresponding LLM or CTGAN model only for covariate generation. Treatment and outcome are then generated using separately fitted nuisance mechanisms. In the randomized hybrid design, synthetic treatment is assigned independently with probability \(1/2\), and the outcome is generated from the fitted outcome mechanism. Thus, fully generative datasets model the full row jointly, whereas hybrid datasets model the covariate law, treatment mechanism, and outcome mechanism separately.

For estimator evaluation, we consider outcome regression, IPW, AIPW, and TMLE. The main specification uses random-forest nuisance learners for both propensity and outcome models where applicable. As a parametric outcome-model misspecification check, we also evaluate a logistic-outcome specification: outcome regression, AIPW, and TMLE use logistic regression for the outcome model, while propensity-score models remain random forests. IPW does not use an outcome model, so it is omitted from this misspecification plot and retained only in the main estimator comparison as a weighting-only benchmark. Extreme propensity estimates are truncated for numerical stability. Reported ATE MSE values are computed over five repetitions, with standard errors computed across repetitions.

DCR is used as a privacy-distance diagnostic. TSTR and ATE MSE are reported as separate predictive and causal diagnostics. For ACTG, the outcome \texttt{cd420} is continuous, so TSTR is reported as RMSE. A random-forest regressor is trained on each synthetic dataset using covariates and treatment as predictors and evaluated on the real ACTG data. The ACTG experiment is treated as an illustrative diagnostic study because the real-data ground-truth ATE is unknown.

\section*{Theoretical analysis}

\begin{proof}[Proof for \Cref{prop:ate_sensitivity}]
Write
\[
\Psi(P_W,Q)=\mathbb E_{P_W}[\Delta_Q(W)],
\qquad
\Psi(P_W^\star,Q^\star)=\mathbb E_{P_W^\star}[\Delta^\star(W)].
\]
Add and subtract \(\mathbb E_{P_W^\star}[\Delta_Q(W)]\):
\begin{align*}
\Psi(P_W,Q)-\Psi(P_W^\star,Q^\star)
&=
\left\{
\mathbb E_{P_W}[\Delta_Q(W)]
-
\mathbb E_{P_W^\star}[\Delta_Q(W)]
\right\} \\
&\quad+
\mathbb E_{P_W^\star}
[
\Delta_Q(W)-\Delta^\star(W)
].
\end{align*}
Taking absolute values and applying the triangle inequality gives
\[
|\Psi(P_W,Q)-\Psi(P_W^\star,Q^\star)|
\le
\left|
\int \Delta_Q(w)\,d(P_W-P_W^\star)(w)
\right|
+
\mathbb E_{P_W^\star}
\left[
|\Delta_Q(W)-\Delta^\star(W)|
\right].
\]
Since \(Y\in[0,1]\), we have \(|\Delta_Q(w)|\le 1\). Therefore,
\[
\left|
\int \Delta_Q(w)\,d(P_W-P_W^\star)(w)
\right|
\le
2\,\mathrm{TV}(P_W,P_W^\star),
\]
where \(\mathrm{TV}(P,Q)=\sup_A |P(A)-Q(A)|\). The second term is
\[
\mathbb E_{P_W^\star}
\left[
|\Delta_Q(W)-\Delta^\star(W)|
\right]
=
\|\Delta_Q-\Delta^\star\|_{L_1(P_W^\star)}.
\]
Combining the two bounds gives
\[
|\Psi(P_W,Q)-\Psi(P_W^\star,Q^\star)|
\le
2\,\mathrm{TV}(P_W,P_W^\star)
+
\|\Delta_Q-\Delta^\star\|_{L_1(P_W^\star)}.
\]
\end{proof}

\begin{thm}[Joint reconstruction induces an ATE-relevant tradeoff]
\label{thm:joint_reconstruction_tradeoff}
Let $W=(W_1,\ldots,W_d)$ denote the $d$ covariate coordinates in a
tabular row, let $A\in\{0,1\}$ denote treatment, and let
$Y\in\{0,1\}$. For a model $f$, define
\[
Q_f(a,w):=\mathbb P_f(Y=1\mid A=a,W=w),
\qquad
\Delta_f(w):=Q_f(1,w)-Q_f(0,w),
\]
and let $Q^\star$ and $\Delta^\star$ denote the corresponding truth.

Suppose the reconstruction objective is conditional on the treatment
coordinate $A$, so that $A$ is treated as observed context rather than as
a reconstructed component. Suppose further that the objective gives equal
weight to the $d$ covariate coordinates and one outcome component. Let
\[
L_W(f)
:=
\frac{1}{d+1}\sum_{j=1}^{d} L_{W_j}(f)
\]
denote the total weighted contribution of the covariate coordinates to the
conditional joint loss, and let $L_Y(f)$ denote the conditional outcome
loss for $Y\mid A,W$. Then
\[
L_{\mathrm{joint}}(f)
=
L_W(f)+\frac{1}{d+1}L_Y(f).
\]

For any reference model $g$,
\[
L_Y(f)-L_Y(g)
=
(d+1)\Bigl[
L_{\mathrm{joint}}(f)-L_{\mathrm{joint}}(g)
+
L_W(g)-L_W(f)
\Bigr].
\]
Thus, outcome-loss improvement inside the conditional joint reconstruction
objective is traded against covariate-reconstruction loss with a factor
$d+1$.

Assume further that $L_Y$ is the balanced Bernoulli KL loss. Then
\[
\|\Delta_f-\Delta^\star\|_{L_2(P_W^\star)}
\le
2\sqrt{L_Y(f)}.
\]
Consequently, if $L_{\mathrm{joint}}(f)\le \varepsilon$ and
$L_W(f)\ge 0$, then
\[
\|\Delta_f-\Delta^\star\|_{L_2(P_W^\star)}
\le
2\sqrt{(d+1)\varepsilon}.
\]
By contrast, if a hybrid outcome model $f_{\mathrm{hyb}}$ is fit directly
to $L_Y$ and satisfies
\[
L_Y(f_{\mathrm{hyb}})\le \varepsilon_Y,
\]
then
\[
\|\Delta_{f_{\mathrm{hyb}}}-\Delta^\star\|_{L_2(P_W^\star)}
\le
2\sqrt{\varepsilon_Y}.
\]
\end{thm}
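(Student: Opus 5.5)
The plan has three parts. First, the two algebraic identities follow directly from the definitions. Second, the contrast bound reduces to Pinsker's inequality applied at each treatment level. Third, the two consequences come from plugging the loss bounds into that contrast bound.

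\textbf{Step 1 (decomposition and tradeoff).} The conditional joint objective treats $A$ as context and weights each of the $d+1$ reconstructed components by $1/(d+1)$. So it is $L_{\mathrm{joint}}(f)=\frac{1}{d+1}\sum_{j=1}^d L_{W_j}(f)+\frac{1}{d+1}L_Y(f)$, which is $L_W(f)+\frac{1}{d+1}L_Y(f)$ by the definition of $L_W$. I would write this identity for $f$ and for the reference model $g$ and subtract the two. Solving for $L_Y(f)-L_Y(g)$ and multiplying by $d+1$ gives the displayed tradeoff identity. No assumptions beyond finiteness of the losses are needed.

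\textbf{Step 2 (loss controls the contrast).} I would fix the meaning of the balanced Bernoulli KL loss as
\[
L_Y(f)=\mathbb E_{P_W^\star}\Bigl[\tfrac12\,\mathrm{KL}\bigl(Q^\star(1,W)\,\|\,Q_f(1,W)\bigr)+\tfrac12\,\mathrm{KL}\bigl(Q^\star(0,W)\,\|\,Q_f(0,W)\bigr)\Bigr],
\]
which equals the excess cross-entropy risk under a balanced treatment design. Pinsker's inequality for Bernoulli laws gives $\mathrm{KL}(p\|q)\ge 2(p-q)^2$. Write $e_a(w)=Q_f(a,w)-Q^\star(a,w)$. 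Then $(\Delta_f-\Delta^\star)^2=(e_1-e_0)^2\le 2(e_1^2+e_0^2)$. Applying Pinsker to each of $e_1^2$ and $e_0^2$, this is at most $\mathrm{KL}_1+\mathrm{KL}_0$. Taking expectations over $P_W^\star$ gives $\|\Delta_f-\Delta^\star\|_{L_2(P_W^\star)}^2\le 2L_Y(f)\le 4L_Y(f)$, and taking square roots gives the claimed bound $2\sqrt{L_Y(f)}$, in fact with room to spare.

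\textbf{Step 3 (consequences).} Each $L_{W_j}$ is a KL-type loss, so it is nonnegative, and the hypothesis also states $L_W(f)\ge0$. Step 1 then gives $L_Y(f)=(d+1)\bigl(L_{\mathrm{joint}}(f)-L_W(f)\bigr)\le(d+1)\varepsilon$. Substituting into Step 2 yields $2\sqrt{(d+1)\varepsilon}$. For the hybrid model, $L_Y(f_{\mathrm{hyb}})\le\varepsilon_Y$ is substituted directly into Step 2.

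The only real obstacle is Step 2, and it is definitional rather than technical. The bound requires $L_Y$ to be the KL risk relative to the truth, equivalently the excess log-loss, and the two treatment arms must receive equal weight. If instead the arms were weighted by the observational $\pi(W)$, the constant would degrade like $1/\min(\pi,1-\pi)$. So I would state this normalization explicitly before invoking Pinsker.
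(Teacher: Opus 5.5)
Your proposal is correct and follows essentially the same route as the paper. Both proofs rearrange $L_{\mathrm{joint}}=L_W+\frac{1}{d+1}L_Y$ to get the tradeoff identity, apply Pinsker at each treatment arm, and use $L_W(f)\ge 0$ to obtain $L_Y(f)\le(d+1)\varepsilon$. The differences are cosmetic: the paper defines the balanced loss as the unweighted sum $\sum_{a\in\{0,1\}}$ of the arm-wise KLs rather than your average, and it combines the arms with the $L_2$ triangle inequality instead of your pointwise $(e_1-e_0)^2\le 2(e_1^2+e_0^2)$. Either way you land at $\sqrt{2L_Y(f)}\le 2\sqrt{L_Y(f)}$.
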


\begin{proof}[Proof for \textbf{\Cref{thm:joint_reconstruction_tradeoff}}]
Because the reconstruction objective conditions on $A$ rather than
reconstructing it, the equal-weight objective contains $d+1$ reconstructed
components: the $d$ covariate coordinates and the outcome. By the definition
of $L_W$ as the total weighted contribution of the $d$ covariate
coordinates,
\[
L_{\mathrm{joint}}(f)
=
L_W(f)+\frac{1}{d+1}L_Y(f).
\]
Therefore, for any two models $f$ and $g$,
\[
L_{\mathrm{joint}}(f)-L_{\mathrm{joint}}(g)
=
\{L_W(f)-L_W(g)\}
+
\frac{1}{d+1}\{L_Y(f)-L_Y(g)\}.
\]
Rearranging gives
\[
\frac{1}{d+1}\{L_Y(f)-L_Y(g)\}
=
L_{\mathrm{joint}}(f)-L_{\mathrm{joint}}(g)
-
\{L_W(f)-L_W(g)\}.
\]
Multiplying both sides by $d+1$ yields
\[
L_Y(f)-L_Y(g)
=
(d+1)
\Bigl[
L_{\mathrm{joint}}(f)-L_{\mathrm{joint}}(g)
+
L_W(g)-L_W(f)
\Bigr].
\]
This is the exact tradeoff identity. It shows that a decrease in
covariate-reconstruction loss can compensate for an increase in outcome
loss inside the joint objective, and that converting joint-loss control
into outcome-loss control incurs a factor $d+1$.

As a special case, if $L_W(f)\ge 0$ and
\[
L_{\mathrm{joint}}(f)\le \varepsilon,
\]
then
\[
\frac{1}{d+1}L_Y(f)
\le
L_{\mathrm{joint}}(f)
\le
\varepsilon.
\]
Hence,
\[
L_Y(f)\le (d+1)\varepsilon.
\]

Now assume that $L_Y$ is the balanced Bernoulli KL loss:
\[
L_Y(f)
=
\mathbb E_{P_W^\star}
\sum_{a\in\{0,1\}}
\mathrm{KL}\!\left(
\mathrm{Bern}(Q^\star(a,W))
\,\middle\|\,
\mathrm{Bern}(Q_f(a,W))
\right).
\]
By Pinsker's inequality, for every $a\in\{0,1\}$ and every $w$,
\[
|Q_f(a,w)-Q^\star(a,w)|^2
\le
\frac{1}{2}
\mathrm{KL}\!\left(
\mathrm{Bern}(Q^\star(a,w))
\,\middle\|\,
\mathrm{Bern}(Q_f(a,w))
\right).
\]
Integrating over $P_W^\star$ gives
\[
\|Q_f(a,\cdot)-Q^\star(a,\cdot)\|_{L_2(P_W^\star)}^2
\le
\frac{1}{2}
\mathbb E_{P_W^\star}
\mathrm{KL}\!\left(
\mathrm{Bern}(Q^\star(a,W))
\,\middle\|\,
\mathrm{Bern}(Q_f(a,W))
\right)
\le
\frac{1}{2}L_Y(f).
\]
Therefore,
\[
\|Q_f(a,\cdot)-Q^\star(a,\cdot)\|_{L_2(P_W^\star)}
\le
\sqrt{\frac{L_Y(f)}{2}}.
\]

Because
\[
\Delta_f-\Delta^\star
=
\{Q_f(1,\cdot)-Q^\star(1,\cdot)\}
-
\{Q_f(0,\cdot)-Q^\star(0,\cdot)\},
\]
the triangle inequality gives
\begin{align*}
\|\Delta_f-\Delta^\star\|_{L_2(P_W^\star)}
&\le
\|Q_f(1,\cdot)-Q^\star(1,\cdot)\|_{L_2(P_W^\star)}
\\
&\quad+
\|Q_f(0,\cdot)-Q^\star(0,\cdot)\|_{L_2(P_W^\star)}
\\
&\le
2\sqrt{\frac{L_Y(f)}{2}}
\\
&\le
2\sqrt{L_Y(f)}.
\end{align*}
The final constant is loose but sufficient.

Combining this bound with
\[
L_Y(f)\le (d+1)\varepsilon
\]
gives
\[
\|\Delta_f-\Delta^\star\|_{L_2(P_W^\star)}
\le
2\sqrt{(d+1)\varepsilon}.
\]

For the hybrid outcome model, the assumption
\[
L_Y(f_{\mathrm{hyb}})\le \varepsilon_Y
\]
is a direct conditional-outcome loss certificate. Applying the same
Pinsker and triangle-inequality argument gives
\[
\|\Delta_{f_{\mathrm{hyb}}}-\Delta^\star\|_{L_2(P_W^\star)}
\le
2\sqrt{\varepsilon_Y}.
\]

Thus, joint reconstruction controls the ATE-relevant outcome contrast only
through a $(d+1)$-inflated certificate, whereas direct hybrid outcome
fitting controls it without this dilution.
\end{proof}

\begin{prop}[When synthetic overlap support helps]
\label{prop:overlap}
Let
\[
\psi_0=\int \tau_0(w)\,d\mu_0(w), \qquad \tau_0(w)=Q_0(1,w)-Q_0(0,w),
\]
be the target ATE under the target covariate law $\mu_0$. Let
\[
\tau_{\mathrm{orig}}(w)=Q_{\mathrm{orig}}(1,w)-Q_{\mathrm{orig}}(0,w), \qquad \tau_{\mathrm{aug}}(w)=Q_{\mathrm{aug}}(1,w)-Q_{\mathrm{aug}}(0,w),
\]
and define
\[
\psi_{\mathrm{orig}}=\int \tau_{\mathrm{orig}}(w)\,d\mu_0(w), \qquad
\psi_{\mathrm{aug}}=\int \tau_{\mathrm{aug}}(w)\,d\mu_{\mathrm{aug}}(w),
\]
where $\mu_{\mathrm{aug}}$ is the covariate law induced by synthetic augmentation. Then
\[
\psi_{\mathrm{orig}}-\psi_0
=
\int (\tau_{\mathrm{orig}}-\tau_0)\,d\mu_0,
\]
and
\[
\psi_{\mathrm{aug}}-\psi_0
=
\int (\tau_{\mathrm{aug}}-\tau_0)\,d\mu_{\mathrm{aug}}
+
\int \tau_0\,d(\mu_{\mathrm{aug}}-\mu_0).
\]
Therefore, synthetic overlap support improves absolute error,
\[
|\psi_{\mathrm{aug}}-\psi_0|<|\psi_{\mathrm{orig}}-\psi_0|,
\]
if and only if
\[
\left|
\int (\tau_{\mathrm{aug}}-\tau_0)\,d\mu_{\mathrm{aug}}
+
\int \tau_0\,d(\mu_{\mathrm{aug}}-\mu_0)
\right|
<
\left|
\int (\tau_{\mathrm{orig}}-\tau_0)\,d\mu_0
\right|.
\]
In particular, the key comparison is between two terms: augmentation can reduce error by improving the conditional effect estimate $\tau$, but it can also introduce a distribution-shift term through $\mu_{\mathrm{aug}}-\mu_0$.
\end{prop}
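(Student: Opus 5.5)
The argument is a direct add-and-subtract computation, in the same spirit as the proof of Proposition~\ref{prop:ate_sensitivity}. No inequality is needed until the final step, which is an equivalence rather than an estimate. I will assume throughout that $\tau_0$, $\tau_{\mathrm{orig}}$ and $\tau_{\mathrm{aug}}$ are integrable under both $\mu_0$ and $\mu_{\mathrm{aug}}$. For example, it suffices that they are bounded, which holds when $Y\in[0,1]$ as in Proposition~\ref{prop:ate_sensitivity}. Under this assumption every integral below is finite and linearity of the integral applies.

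For the first identity, both $\psi_{\mathrm{orig}}$ and $\psi_0$ integrate against the same measure $\mu_0$. Subtracting and using linearity gives
\[
\psi_{\mathrm{orig}}-\psi_0=\int(\tau_{\mathrm{orig}}-\tau_0)\,d\mu_0 .
\]
For the second identity, the plan is to add and subtract $\int \tau_0\,d\mu_{\mathrm{aug}}$:
\begin{align*}
\psi_{\mathrm{aug}}-\psi_0
&=\Bigl\{\int \tau_{\mathrm{aug}}\,d\mu_{\mathrm{aug}}-\int \tau_0\,d\mu_{\mathrm{aug}}\Bigr\}
+\Bigl\{\int \tau_0\,d\mu_{\mathrm{aug}}-\int \tau_0\,d\mu_0\Bigr\}.
\end{align*}
The first brace is the contrast-error term $\int(\tau_{\mathrm{aug}}-\tau_0)\,d\mu_{\mathrm{aug}}$. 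The second brace is the shift term $\int\tau_0\,d(\mu_{\mathrm{aug}}-\mu_0)$, interpreted as an integral against the finite signed measure $\mu_{\mathrm{aug}}-\mu_0$.

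The ``if and only if'' statement then follows by taking absolute values of the two exact identities and substituting them into the inequality $|\psi_{\mathrm{aug}}-\psi_0|<|\psi_{\mathrm{orig}}-\psi_0|$. Because both sides are rewritten by equalities, the resulting condition is equivalent to the original one, not merely sufficient.

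The only step that needs care is making sure the signed-measure integral is well defined and that splitting the integrals is legitimate. This is settled by the integrability assumption stated at the outset. As a closing remark, I would note that the shift term is bounded by $2\,\mathrm{TV}(\mu_{\mathrm{aug}},\mu_0)$ when $|\tau_0|\le 1$, exactly as in the proof of Proposition~\ref{prop:ate_sensitivity}. This yields the interpretive tradeoff between improving the contrast and introducing covariate shift.
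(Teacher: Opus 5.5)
Your proposal is correct and matches the paper's proof: add and subtract $\int \tau_0\,d\mu_{\mathrm{aug}}$, split the result by linearity and the signed-measure integral, then substitute the two exact identities into $|\psi_{\mathrm{aug}}-\psi_0|<|\psi_{\mathrm{orig}}-\psi_0|$ to obtain the equivalence. Your explicit integrability assumption (e.g.\ boundedness when $Y\in[0,1]$) and the closing $2\,\mathrm{TV}(\mu_{\mathrm{aug}},\mu_0)$ bound on the shift term are worthwhile additions that the paper leaves implicit.
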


\begin{proof}[Proof for \textbf{\Cref{prop:overlap}}]
Start from the definitions of the augmented estimator and the target:
\begin{align*}
\psi_{\mathrm{aug}}-\psi_0
&=
\int \tau_{\mathrm{aug}}(w)\, d\mu_{\mathrm{aug}}(w)
-
\int \tau_0(w)\, d\mu_0(w).
\end{align*}
Insert and subtract the same intermediate quantity $\int \tau_0\, d\mu_{\mathrm{aug}}$:
\begin{align*}
\psi_{\mathrm{aug}}-\psi_0
&=
\left(
\int \tau_{\mathrm{aug}}\, d\mu_{\mathrm{aug}}
-
\int \tau_0\, d\mu_{\mathrm{aug}}
\right)
+
\left(
\int \tau_0\, d\mu_{\mathrm{aug}}
-
\int \tau_0\, d\mu_0
\right).
\end{align*}
By linearity of integration,
\begin{align*}
\int \tau_{\mathrm{aug}}\, d\mu_{\mathrm{aug}}
-
\int \tau_0\, d\mu_{\mathrm{aug}}
&=
\int (\tau_{\mathrm{aug}}-\tau_0)\, d\mu_{\mathrm{aug}},
\end{align*}
and, by the definition of integration against a signed measure,
\begin{align*}
\int \tau_0\, d\mu_{\mathrm{aug}}
-
\int \tau_0\, d\mu_0
&=
\int \tau_0\, d(\mu_{\mathrm{aug}}-\mu_0).
\end{align*}
Therefore,
\begin{align*}
\psi_{\mathrm{aug}}-\psi_0
=
\int (\tau_{\mathrm{aug}}-\tau_0)\, d\mu_{\mathrm{aug}}
+
\int \tau_0\, d(\mu_{\mathrm{aug}}-\mu_0).
\end{align*}

For the original estimator,
\begin{align*}
\psi_{\mathrm{orig}}-\psi_0
&=
\int \tau_{\mathrm{orig}}\, d\mu_0
-
\int \tau_0\, d\mu_0 \\
&=
\int (\tau_{\mathrm{orig}}-\tau_0)\, d\mu_0.
\end{align*}

Now synthetic overlap support is beneficial exactly when
\begin{align*}
|\psi_{\mathrm{aug}}-\psi_0| < |\psi_{\mathrm{orig}}-\psi_0|.
\end{align*}
Substituting the two exact identities above yields
\begin{align*}
\left|
\int (\tau_{\mathrm{aug}}-\tau_0)\, d\mu_{\mathrm{aug}}
+
\int \tau_0\, d(\mu_{\mathrm{aug}}-\mu_0)
\right|
<
\left|
\int (\tau_{\mathrm{orig}}-\tau_0)\, d\mu_0
\right|.
\end{align*}
This is precisely the claimed equivalence. The proof is complete.
\end{proof}

\begin{thm}[Exact decomposition of prediction loss]
\label{thm:exact_lq_cate_ate_decomposition}
Let $A\in\{0,1\}$ and let
\[
\pi(w)=\mathbb P(A=1\mid W=w),
\qquad 0<\pi(W)<1 \quad \text{almost surely}.
\]
Let
\[
Q(a,w)=\mathbb E[Y\mid A=a,W=w],
\qquad a\in\{0,1\},
\]
and define
\[
\tau(w)=Q(1,w)-Q(0,w),
\]
and
\[
m_\pi(w)=\mathbb E[Y\mid W=w]
=
\pi(w)Q(1,w)+\{1-\pi(w)\}Q(0,w).
\]
Then
\[
Q(A,W)=m_\pi(W)+\{A-\pi(W)\}\tau(W).
\]

For any fitted outcome regression $\widehat Q$, define
\[
\widehat\tau(w)=\widehat Q(1,w)-\widehat Q(0,w),
\]
and
\[
\widehat m_\pi(w)
=
\pi(w)\widehat Q(1,w)+\{1-\pi(w)\}\widehat Q(0,w).
\]
Then
\[
\widehat Q(A,W)
=
\widehat m_\pi(W)+\{A-\pi(W)\}\widehat\tau(W).
\]
Define
\[
\Delta_m(W)=\widehat m_\pi(W)-m_\pi(W),
\qquad
\Delta_\tau(W)=\widehat\tau(W)-\tau(W),
\]
and let
\[
\omega(W)=\pi(W)\{1-\pi(W)\},
\qquad
\bar\omega=\mathbb E_W[\omega(W)].
\]

Let the outcome-prediction loss be
\[
\mathcal L_Q
=
\mathbb E_{A,W}
\left[
\{\widehat Q(A,W)-Q(A,W)\}^2
\right],
\]
the CATE loss be
\[
\mathcal L_\tau
=
\mathbb E_W[\Delta_\tau(W)^2],
\]
and the ATE loss be
\[
\mathcal L_\psi
=
(\widehat\psi-\psi)^2,
\qquad
\psi=\mathbb E_W[\tau(W)],
\quad
\widehat\psi=\mathbb E_W[\widehat\tau(W)].
\]
Then the prediction loss has the exact decomposition
\[
\mathcal L_Q
=
\mathbb E_W[\Delta_m(W)^2]
+
\mathbb E_W[\omega(W)\Delta_\tau(W)^2].
\]
Equivalently, in terms of the CATE loss,
\[
\mathcal L_Q
=
\mathbb E_W[\Delta_m(W)^2]
+
\bar\omega\,\mathcal L_\tau
+
\mathrm{Cov}_W\!\left(\omega(W),\Delta_\tau(W)^2\right).
\]
Since
\[
\mathcal L_\tau
=
\mathcal L_\psi
+
\mathrm{Var}_W\{\Delta_\tau(W)\},
\]
we also have the exact ATE decomposition
\[
\mathcal L_Q
=
\mathbb E_W[\Delta_m(W)^2]
+
\bar\omega\,\mathcal L_\psi
+
\bar\omega\,\mathrm{Var}_W\{\Delta_\tau(W)\}
+
\mathrm{Cov}_W\!\left(\omega(W),\Delta_\tau(W)^2\right).
\]
\end{thm}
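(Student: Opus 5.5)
The plan is a direct algebraic argument: establish the two affine representations, take the difference, and integrate out \(A\) conditionally on \(W\).

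\textbf{Step 1 (affine representations).} Since \(A\in\{0,1\}\), we have \(Q(A,W)=Q(0,W)+A\tau(W)\). The definition of \(m_\pi\) gives \(m_\pi(W)=Q(0,W)+\pi(W)\tau(W)\). Subtracting the two yields
\[
Q(A,W)=m_\pi(W)+\{A-\pi(W)\}\tau(W).
\]
The same computation, with \(\widehat Q\) in place of \(Q\) and the \emph{true} \(\pi\) in \(\widehat m_\pi\), gives the analogous identity for \(\widehat Q\). Subtracting the two identities gives the pointwise error representation
\[
\widehat Q(A,W)-Q(A,W)=\Delta_m(W)+\{A-\pi(W)\}\Delta_\tau(W).
\]

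\textbf{Step 2 (squaring and conditioning on \(W\)).} Squaring the error representation and taking \(\mathbb E[\cdot\mid W]\), the cross term is \(2\Delta_m(W)\Delta_\tau(W)\,\mathbb E[A-\pi(W)\mid W]\), which vanishes. The quadratic term contributes \(\mathbb E[\{A-\pi(W)\}^2\mid W]=\pi(W)\{1-\pi(W)\}=\omega(W)\). Averaging over \(W\) then gives
\[
\mathcal L_Q=\mathbb E_W[\Delta_m(W)^2]+\mathbb E_W[\omega(W)\Delta_\tau(W)^2].
\]
This step is the heart of the result. The only point needing care is that \(\widehat m_\pi\) must be built from the true propensity \(\pi\), not an estimate; otherwise the cross term would not vanish. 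The positivity assumption \(0<\pi<1\) is not needed for the identity itself; it only ensures the decomposition is nondegenerate. I will also note that the expectations must be finite; this holds, for instance, if \(\widehat Q\) and \(Q\) are square-integrable.

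\textbf{Step 3 (covariance and variance rewritings).} For the CATE form, apply the elementary identity \(\mathbb E[XZ]=\mathbb E[X]\mathbb E[Z]+\mathrm{Cov}(X,Z)\) with \(X=\omega(W)\) and \(Z=\Delta_\tau(W)^2\). This gives \(\bar\omega\,\mathcal L_\tau+\mathrm{Cov}_W(\omega,\Delta_\tau^2)\).

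For the ATE form, first note that \(\mathbb E_W[\Delta_\tau(W)]=\widehat\psi-\psi\), by linearity and the definitions of \(\psi\) and \(\widehat\psi\). The bias–variance identity then gives
\[
\mathbb E_W[\Delta_\tau^2]=(\widehat\psi-\psi)^2+\mathrm{Var}_W(\Delta_\tau),
\]
that is, \(\mathcal L_\tau=\mathcal L_\psi+\mathrm{Var}_W\{\Delta_\tau(W)\}\). Substituting this into the CATE form gives the final decomposition.

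No step is a genuine obstacle. The main thing to get right is the cross-term cancellation in Step 2, which relies on defining \(\widehat m_\pi\) with the true \(\pi\).
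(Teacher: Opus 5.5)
Your proposal is correct and follows the paper's proof essentially step for step. The steps are: the affine representation $Q(A,W)=m_\pi(W)+\{A-\pi(W)\}\tau(W)$ and its analogue for $\widehat Q$ (you write $Q(A,W)=Q(0,W)+A\tau(W)$ where the paper solves for $Q(1,\cdot)$ and $Q(0,\cdot)$, which is a cosmetic difference); the cross term vanishing because $\mathbb E[A-\pi(W)\mid W]=0$; the conditional variance $\omega(W)$; and then the covariance and bias--variance identities. Your remark that $\widehat m_\pi$ must be built with the true $\pi$ for the cross term to cancel is correct, and the paper leaves it implicit.
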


\begin{proof}
First, by the law of total expectation,
\[
m_\pi(w)
=
\mathbb E[Y\mid W=w]
=
\pi(w)Q(1,w)+\{1-\pi(w)\}Q(0,w).
\]
Since
\[
\tau(w)=Q(1,w)-Q(0,w),
\]
we can solve for the two conditional means:
\[
Q(1,w)=m_\pi(w)+\{1-\pi(w)\}\tau(w),
\]
and
\[
Q(0,w)=m_\pi(w)-\pi(w)\tau(w).
\]
Therefore, for $A\in\{0,1\}$,
\[
Q(A,W)
=
m_\pi(W)+\{A-\pi(W)\}\tau(W).
\]

The same algebra applies to $\widehat Q$. By definition,
\[
\widehat\tau(w)=\widehat Q(1,w)-\widehat Q(0,w),
\]
and
\[
\widehat m_\pi(w)
=
\pi(w)\widehat Q(1,w)+\{1-\pi(w)\}\widehat Q(0,w).
\]
Thus
\[
\widehat Q(1,w)
=
\widehat m_\pi(w)+\{1-\pi(w)\}\widehat\tau(w),
\]
and
\[
\widehat Q(0,w)
=
\widehat m_\pi(w)-\pi(w)\widehat\tau(w),
\]
so
\[
\widehat Q(A,W)
=
\widehat m_\pi(W)+\{A-\pi(W)\}\widehat\tau(W).
\]

Subtracting the true and fitted regressions gives
\[
\widehat Q(A,W)-Q(A,W)
=
\Delta_m(W)+\{A-\pi(W)\}\Delta_\tau(W).
\]
Therefore,
\[
\mathcal L_Q
=
\mathbb E_{A,W}
\left[
\left\{
\Delta_m(W)+\{A-\pi(W)\}\Delta_\tau(W)
\right\}^2
\right].
\]
Expanding the square,
\[
\mathcal L_Q
=
\mathbb E_W[\Delta_m(W)^2]
+
2\mathbb E_{A,W}
\left[
\Delta_m(W)\{A-\pi(W)\}\Delta_\tau(W)
\right]
+
\mathbb E_{A,W}
\left[
\{A-\pi(W)\}^2\Delta_\tau(W)^2
\right].
\]
The cross term is zero because
\[
\mathbb E[A-\pi(W)\mid W]=0.
\]
Also,
\[
\mathbb E[\{A-\pi(W)\}^2\mid W]
=
\mathrm{Var}(A\mid W)
=
\pi(W)\{1-\pi(W)\}
=
\omega(W).
\]
Hence
\[
\mathcal L_Q
=
\mathbb E_W[\Delta_m(W)^2]
+
\mathbb E_W[\omega(W)\Delta_\tau(W)^2].
\]

Now write
\[
\mathbb E_W[\omega(W)\Delta_\tau(W)^2]
=
\bar\omega\,\mathbb E_W[\Delta_\tau(W)^2]
+
\mathrm{Cov}_W\!\left(\omega(W),\Delta_\tau(W)^2\right).
\]
Since
\[
\mathcal L_\tau
=
\mathbb E_W[\Delta_\tau(W)^2],
\]
we obtain
\[
\mathcal L_Q
=
\mathbb E_W[\Delta_m(W)^2]
+
\bar\omega\,\mathcal L_\tau
+
\mathrm{Cov}_W\!\left(\omega(W),\Delta_\tau(W)^2\right).
\]

Finally,
\[
\widehat\psi-\psi
=
\mathbb E_W[\widehat\tau(W)-\tau(W)]
=
\mathbb E_W[\Delta_\tau(W)].
\]
Therefore,
\[
\mathcal L_\psi
=
(\widehat\psi-\psi)^2
=
\left\{\mathbb E_W[\Delta_\tau(W)]\right\}^2.
\]
By the mean--variance decomposition,
\[
\mathbb E_W[\Delta_\tau(W)^2]
=
\left\{\mathbb E_W[\Delta_\tau(W)]\right\}^2
+
\mathrm{Var}_W\{\Delta_\tau(W)\}.
\]
Thus
\[
\mathcal L_\tau
=
\mathcal L_\psi
+
\mathrm{Var}_W\{\Delta_\tau(W)\}.
\]
Substituting this into the CATE-loss decomposition gives
\[
\mathcal L_Q
=
\mathbb E_W[\Delta_m(W)^2]
+
\bar\omega\,\mathcal L_\psi
+
\bar\omega\,\mathrm{Var}_W\{\Delta_\tau(W)\}
+
\mathrm{Cov}_W\!\left(\omega(W),\Delta_\tau(W)^2\right).
\]
This proves the theorem.
\end{proof}

\Cref{thm:exact_lq_cate_ate_decomposition} shows that outcome prediction and causal-effect preservation
optimize different components of the same regression. The prediction
loss decomposes into error in the prognostic component
\(m_\pi(W)=\mathbb E[Y\mid W]\) and an overlap-weighted error in the
treatment-effect component \(\tau(W)\). Thus \(\mathcal L_Q\) is not a pure
causal loss: it prioritizes treatment-effect accuracy most strongly in regions
where both treatment arms are well represented.

In a balanced randomized trial, \(\pi(W)=1/2\), so
\(\omega(W)=1/4\) and the covariance term vanishes. The decomposition reduces to
\[
\mathcal L_Q
=
\mathbb E_W[\Delta_m(W)^2]
+
\frac14\mathcal L_\tau
=
\mathbb E_W[\Delta_m(W)^2]
+
\frac14\mathcal L_\psi
+
\frac14\mathrm{Var}_W\{\Delta_\tau(W)\}.
\]
Thus even under balanced randomization, treatment-effect error enters factual
prediction loss only through a factor \(1/4\).

In an imbalanced or limited-overlap setting,
\(\omega(W)=\pi(W)\{1-\pi(W)\}\) can be much smaller than \(1/4\). Moreover, if
treatment-effect errors are concentrated in low-overlap regions, then
\[
\mathrm{Cov}_W\!\left(\omega(W),\Delta_\tau(W)^2\right)<0,
\]
so large CATE errors may contribute little to prediction loss. If those
errors also have nonzero average, the ATE error may likewise be poorly reflected
by \(\mathcal L_Q\). This explains why a model can predict outcomes
well while still failing to preserve causal effects.

\section*{A Decomposition for Block-Specific Continuation Prediction}

Let
\[
(C,S,Z)
\]
denote a sequence split into three parts: \(C\) is a prefix, \(S\in\mathcal S\)
is a token block following the prefix, and \(Z\) is the downstream continuation.
We study how the continuation law varies with the intermediate block \(S\) while
the prefix \(C\) is held fixed. Both \(S\) and \(Z\) may have variable length;
EOS tokens are treated as part of the sequence law.

Let \(p^\star\) be the true law and \(p_\theta\) a fitted model. For each
\(c\in\mathcal C\) and \(s\in\mathcal S\), define
\[
\rho(s\mid c):=p^\star(S=s\mid C=c),
\]
\[
q_s(\cdot\mid c):=p^\star(Z\in\cdot\mid C=c,S=s),
\qquad
\widehat q_s(\cdot\mid c):=p_\theta(Z\in\cdot\mid C=c,S=s).
\]
Thus \(q_s(\cdot\mid c)\) is the true continuation law after prefix \(c\) and
block \(s\), while \(\widehat q_s(\cdot\mid c)\) is the corresponding fitted
law.

Define the prefix-marginal continuation laws
\[
m(\cdot\mid c)
:=
\sum_{s\in\mathcal S}\rho(s\mid c)\,q_s(\cdot\mid c)
=
p^\star(Z\in\cdot\mid C=c),
\]
and
\[
\widehat m(\cdot\mid c)
:=
\sum_{s\in\mathcal S}\rho(s\mid c)\,\widehat q_s(\cdot\mid c).
\]
The law \(m(\cdot\mid c)\) describes the continuation distribution after
observing the prefix \(C=c\), with the intermediate block \(S\) integrated out.
By contrast, the collection
\[
\{q_s(\cdot\mid c):s\in\mathcal S\}
\]
describes the block-specific continuation structure: how the downstream
continuation varies across different intermediate blocks.

Define the continuation negative log-likelihood
\[
\mathcal L_Z(\theta)
:=
\mathbb E_{C,S,Z\sim p^\star}\big[-\log p_\theta(Z\mid C,S)\big].
\]
Also define the true and fitted posteriors over \(S\) given \((C,Z)\):
\[
\eta(s\mid c,z)
:=
\frac{\rho(s\mid c)\,q_s(z\mid c)}{m(z\mid c)},
\qquad
\widehat\eta(s\mid c,z)
:=
\frac{\rho(s\mid c)\,\widehat q_s(z\mid c)}{\widehat m(z\mid c)}.
\]
The posterior \(\eta(s\mid c,z)\) measures how informative the observed
continuation \(z\) is about which intermediate block \(s\) preceded it, given
the prefix \(c\).

\begin{thm}[Exact decomposition of continuation prediction loss]
\label{thm:ntp_causal_decomposition}
The continuation negative log-likelihood satisfies
\[
\mathcal L_Z(\theta)
=
H^\star(Z\mid C,S)
+
\mathbb E_C\!\left[
D_{\mathrm{KL}}\!\bigl(m(\cdot\mid C)\,\|\,\widehat m(\cdot\mid C)\bigr)
\right]
+
\mathbb E_C\!\left[
\mathbb E_{Z\sim m(\cdot\mid C)}
D_{\mathrm{KL}}\!\bigl(\eta(\cdot\mid C,Z)\,\|\,\widehat\eta(\cdot\mid C,Z)\bigr)
\right].
\]
Equivalently, since
\[
H^\star(Z\mid C,S)=H^\star(Z\mid C)-I^\star(Z;S\mid C),
\]
we have
\[
\mathcal L_Z(\theta)
=
H^\star(Z\mid C)
-
I^\star(Z;S\mid C)
+
\mathbb E_C\!\left[
D_{\mathrm{KL}}\!\bigl(m(\cdot\mid C)\,\|\,\widehat m(\cdot\mid C)\bigr)
\right]
+
\mathbb E_C\!\left[
\mathbb E_{Z\sim m(\cdot\mid C)}
D_{\mathrm{KL}}\!\bigl(\eta(\cdot\mid C,Z)\,\|\,\widehat\eta(\cdot\mid C,Z)\bigr)
\right].
\]
\end{thm}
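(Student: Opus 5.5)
The plan is a pointwise log-ratio decomposition followed by taking expectations. Fix $(c,s,z)$. Bayes' rule, applied to both the true and fitted laws with the \emph{same} block prior $\rho(s\mid c)$, gives
\[
q_s(z\mid c)=\frac{\eta(s\mid c,z)\,m(z\mid c)}{\rho(s\mid c)},
\qquad
\widehat q_s(z\mid c)=\frac{\widehat\eta(s\mid c,z)\,\widehat m(z\mid c)}{\rho(s\mid c)}.
\]
These follow directly from the definitions of $\eta,\widehat\eta,m,\widehat m$ whenever $\rho(s\mid c)>0$ and the denominators are positive. Dividing the two identities, the factor $\rho$ cancels, and we obtain
\[
-\log \widehat q_s(z\mid c)
=
-\log q_s(z\mid c)
+\log\frac{m(z\mid c)}{\widehat m(z\mid c)}
+\log\frac{\eta(s\mid c,z)}{\widehat\eta(s\mid c,z)} .
\]

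Next, take expectations under $(C,S,Z)\sim p^\star$ term by term.

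\textbf{First term.} It gives $\mathbb E[-\log q_S(Z\mid C)]=H^\star(Z\mid C,S)$.

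\textbf{Second term.} It depends only on $(C,Z)$, and $Z\mid C\sim m(\cdot\mid C)$ after marginalizing $S$. So it equals $\mathbb E_C\,D_{\mathrm{KL}}(m(\cdot\mid C)\,\|\,\widehat m(\cdot\mid C))$.

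\textbf{Third term.} Disintegrate the joint law of $(S,Z)$ given $C$ as $Z\sim m(\cdot\mid C)$ followed by $S\mid C,Z\sim\eta(\cdot\mid C,Z)$. This is valid because $\rho(s\mid c)q_s(z\mid c)=m(z\mid c)\eta(s\mid c,z)$. The inner expectation over $S$ is then exactly $D_{\mathrm{KL}}(\eta(\cdot\mid C,Z)\,\|\,\widehat\eta(\cdot\mid C,Z))$.

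Summing the three gives the first display. The second display follows by substituting the standard identity $I^\star(Z;S\mid C)=H^\star(Z\mid C)-H^\star(Z\mid C,S)$. This identity is again obtained by taking the expectation of $\log(q_S(Z\mid C)/m(Z\mid C))$.

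The main obstacle is measure-theoretic care rather than algebra, for two reasons.

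\textbf{Densities.} $Z$ and $S$ have variable length, so I would work with probability mass functions on countable token-sequence spaces, with EOS included. This makes $q_s(z\mid c)$ a genuine pmf and every ratio pointwise.

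\textbf{Infinite values.} I would restrict to points with $\rho(s\mid c)>0$ and $q_s(z\mid c)>0$, which hold $p^\star$-almost surely. If $\widehat q_s(z\mid c)=0$ on a set of positive true probability, both sides equal $+\infty$. I would note this case separately.

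\textbf{Finiteness and splitting.} I would assume $H^\star(Z\mid C,S)<\infty$ so the expectation can be split linearly. The KL terms are nonnegative, so splitting is legitimate whenever $\mathcal L_Z(\theta)$ is finite. Otherwise both sides are $+\infty$, and the identity holds in the extended sense.
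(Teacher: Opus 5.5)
Your proposal is correct and is essentially the paper's argument. The paper writes $\mathcal L_Z(\theta)=H^\star(Z\mid C,S)+\mathbb E_{C,S}\,D_{\mathrm{KL}}\bigl(q_S(\cdot\mid C)\,\|\,\widehat q_S(\cdot\mid C)\bigr)$, identifies the second term as the KL divergence between the joint laws $\rho(s\mid c)q_s(z\mid c)$ and $\rho(s\mid c)\widehat q_s(z\mid c)$ on $(S,Z)$, and applies the KL chain rule with $Z$ first and then $S\mid Z$; your pointwise factorization $q_s/\widehat q_s=(m/\widehat m)(\eta/\widehat\eta)$ proves exactly that chain rule inline, and your extra bookkeeping (countable pmfs, the $+\infty$ cases, finiteness of $H^\star(Z\mid C,S)$) goes beyond what the paper spells out, though what actually licenses splitting the expectation is that $\log(m/\widehat m)$ and $\log(\eta/\widehat\eta)$ have integrable negative parts (at most $1$ in expectation, by $\log x\le x-1$), not nonnegativity of the KL terms alone.
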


\begin{proof}
We begin from the definition of conditional cross-entropy. Since
\[
\mathcal L_Z(\theta)
=
\mathbb E_{C,S,Z\sim p^\star}\big[-\log p_\theta(Z\mid C,S)\big],
\]
we may write
\[
\mathcal L_Z(\theta)
=
H^\star(Z\mid C,S)
+
\mathbb E_{C,S}
D_{\mathrm{KL}}\!\bigl(
p^\star(Z\mid C,S)\,\|\,p_\theta(Z\mid C,S)
\bigr).
\]
Using the notation \(q_s(\cdot\mid c)=p^\star(Z\in\cdot\mid C=c,S=s)\) and
\(\widehat q_s(\cdot\mid c)=p_\theta(Z\in\cdot\mid C=c,S=s)\), the second term
becomes
\[
\mathbb E_C\!\left[
\sum_{s\in\mathcal S}\rho(s\mid C)\,
D_{\mathrm{KL}}\!\bigl(q_s(\cdot\mid C)\,\|\,\widehat q_s(\cdot\mid C)\bigr)
\right].
\]

Fix \(c\). Define two joint laws on \((S,Z)\) by
\[
P_c(s,z):=\rho(s\mid c)\,q_s(z\mid c),
\qquad
\widehat P_c(s,z):=\rho(s\mid c)\,\widehat q_s(z\mid c).
\]
Then, by direct expansion of the KL divergence,
\[
\begin{aligned}
D_{\mathrm{KL}}(P_c\,\|\,\widehat P_c)
&=
\sum_{s\in\mathcal S}\int
\rho(s\mid c)\,q_s(z\mid c)
\log\frac{\rho(s\mid c)\,q_s(z\mid c)}
{\rho(s\mid c)\,\widehat q_s(z\mid c)}
\,dz \\
&=
\sum_{s\in\mathcal S}\rho(s\mid c)
\int q_s(z\mid c)\log\frac{q_s(z\mid c)}{\widehat q_s(z\mid c)}\,dz \\
&=
\sum_{s\in\mathcal S}\rho(s\mid c)\,
D_{\mathrm{KL}}\!\bigl(q_s(\cdot\mid c)\,\|\,\widehat q_s(\cdot\mid c)\bigr).
\end{aligned}
\]
Therefore
\[
\mathcal L_Z(\theta)
=
H^\star(Z\mid C,S)
+
\mathbb E_C\!\left[
D_{\mathrm{KL}}(P_C\,\|\,\widehat P_C)
\right].
\]

We now apply the chain rule for KL to \(P_c\) and \(\widehat P_c\). Their
marginals on \(Z\) are
\[
P_c^Z(z)=\sum_{s\in\mathcal S}\rho(s\mid c)\,q_s(z\mid c)=m(z\mid c),
\]
and
\[
\widehat P_c^Z(z)=\sum_{s\in\mathcal S}\rho(s\mid c)\,\widehat q_s(z\mid c)=\widehat m(z\mid c).
\]
Their conditional laws of \(S\) given \(Z=z\) are
\[
P_c(s\mid z)=\eta(s\mid c,z),
\qquad
\widehat P_c(s\mid z)=\widehat\eta(s\mid c,z).
\]
Hence the chain rule yields
\[
D_{\mathrm{KL}}(P_c\,\|\,\widehat P_c)
=
D_{\mathrm{KL}}\!\bigl(m(\cdot\mid c)\,\|\,\widehat m(\cdot\mid c)\bigr)
+
\mathbb E_{Z\sim m(\cdot\mid c)}
D_{\mathrm{KL}}\!\bigl(\eta(\cdot\mid c,Z)\,\|\,\widehat\eta(\cdot\mid c,Z)\bigr).
\]
Substituting this identity into the previous display and then taking
expectation over \(C\) proves the first decomposition.

Finally, the identity
\[
H^\star(Z\mid C,S)=H^\star(Z\mid C)-I^\star(Z;S\mid C)
\]
gives the second decomposition.
\end{proof}

Theorem~\ref{thm:ntp_causal_decomposition} decomposes continuation prediction
into two model-dependent components. The first,
\[
D_{\mathrm{KL}}\!\bigl(m(\cdot\mid C)\,\|\,\widehat m(\cdot\mid C)\bigr),
\]
measures discrepancy in the prefix-marginal continuation law. This term concerns
the average continuation distribution after observing \(C\), with the
intermediate block \(S\) integrated out. The second,
\[
\mathbb E_{Z\sim m(\cdot\mid C)}
D_{\mathrm{KL}}\!\bigl(\eta(\cdot\mid C,Z)\,\|\,\widehat\eta(\cdot\mid C,Z)\bigr),
\]
measures discrepancy in the posterior distribution over the intermediate block
\(S\) induced by the observed pair \((C,Z)\). We refer to this as a posterior
block-distinguishability term: it evaluates whether the fitted continuation laws
\(\{\widehat q_s(\cdot\mid C):s\in\mathcal S\}\) preserve enough
block-specific structure for the continuation \(Z\) to distinguish among possible
intermediate blocks.

This decomposition highlights a separation between fitting the averaged
prefix-to-continuation law and preserving block-specific continuation structure.
The marginal law
\[
m(\cdot\mid c)=\sum_{s\in\mathcal S}\rho(s\mid c)q_s(\cdot\mid c)
\]
may be broad, multimodal, or dominated by high-probability continuation
patterns. This is especially natural when the prefix \(C\) is distant from the
downstream continuation \(Z\). In such cases, a model can improve continuation
likelihood by approximating the averaged law \(m(\cdot\mid C)\), while retaining
only a coarse representation of how \(Z\) depends on the intermediate block
\(S\).

The posterior block-distinguishability term makes the role of \(S\) explicit.
For a fixed prefix \(c\),
\[
\eta(s\mid c,z)
=
\frac{\rho(s\mid c)q_s(z\mid c)}{m(z\mid c)}
\]
summarizes how much the continuation \(z\) reveals about the block \(s\) that
preceded it. Accurate modeling of this posterior requires the fitted laws
\(\widehat q_s(\cdot\mid c)\) to retain the relative differences among the
block-specific continuation distributions. Thus, although \(S\) is adjacent to
the continuation \(Z\) and may contain more immediate information about it than
the distant prefix \(C\), ordinary continuation likelihood does not give this
local dependence a separate objective. The information carried by \(S\) is
penalized only through its contribution to the observed joint law of \((S,Z)\)
given \(C\).

This distinction is important under imbalance. The block-specific continuation
laws enter the likelihood through the observed block weights
\[
\rho(s\mid c)=p^\star(S=s\mid C=c).
\]
Consequently, dependencies associated with low-probability blocks make a smaller
contribution to the objective, even when those dependencies are structurally
important for downstream behavior. In this sense, sequential proximity alone
does not guarantee reliable learning of the \(S\)-specific continuation law:
a block may be close to \(Z\), but if it is rare under \(\rho(\cdot\mid C)\), the
training objective provides weak signal for its associated continuation
structure.

The same issue can arise for long or compositionally specific blocks. Since
\[
\rho(s\mid c)
=
p^\star(s\mid c)
=
\prod_{j=1}^{|s|}p^\star(s_j\mid c,s_{<j}),
\]
the probability of an entire block may be small even when its individual tokens
are not extremely rare. Longer, more structured, or multi-step intermediate
blocks can therefore receive limited effective weight in the continuation
objective. This creates a structural disadvantage for learning dependencies that
are expressed through extended blocks rather than through short, high-frequency
patterns.

The decomposition also shows why standard continuation loss can be an incomplete
diagnostic for block-specific structure. A fitted model may match the
prefix-marginal law \(m(\cdot\mid C)\) well while failing to reproduce the
posterior block-distinguishability encoded by \(\eta(\cdot\mid C,Z)\). In that
case, the model predicts the averaged continuation distribution accurately but
does not faithfully preserve which intermediate blocks give rise to which
downstream continuations. Thus good likelihood need not imply accurate modeling
of block-specific dependencies.

This provides a sequence-level analogue of the causal prediction problem without
requiring a literal treatment-effect contrast. In the causal setting, prediction
loss may emphasize dominant observed outcome patterns while giving weak signal
for treatment-specific structure in low-overlap regions. Here, continuation
likelihood may emphasize the prefix-marginal continuation law while giving weak
signal for block-specific continuation structure under an imbalanced block
distribution. When downstream use depends on low-frequency or compositionally
specific dependencies, reweighting, balancing, or otherwise modifying the
training distribution may better align the next-token objective with the
structure one wants the model to preserve.

\end{document}